\pdfoutput=1
\documentclass[11pt,letterpaper]{article}
\usepackage{jheppub}
\usepackage{amsthm}
\def\ket#1{\langle #1\rangle}
\usepackage{multirow}
\usepackage{array}

\newtheorem*{mainconjecture}{Main Conjecture}
\newtheorem*{weakbasisconjecture}{Weak Basis Conjecture}
\newtheorem*{strongbasisconjecture}{Strong Basis Conjecture}
\newtheorem{lemma}{Lemma}
\newtheorem{corollary}{Corollary}

\title{The Soft-Collinear Bootstrap:\\
$\mathcal{N}=4$ Yang-Mills Amplitudes
at Six- and Seven-Loops}

\author[a]{J.~L.~Bourjaily,}
\author[b]{A.~DiRe,}
\author[c]{A.~Shaikh,}
\author[b]{M.~Spradlin}
\author[b]{and A.~Volovich}

\affiliation[a]{Department of Physics,
Harvard University, Cambridge, MA 02138}

\affiliation[b]{Department of Physics,
Brown University, Providence, RI 02912}

\affiliation[c]{School of Engineering,
Brown University, Providence, RI 02912}

\abstract{
Infrared divergences in scattering amplitudes arise when a loop momentum $\ell$ becomes collinear with a massless external momentum $p$.
In gauge theories, it is known
that the $L$-loop logarithm of a planar amplitude
has much softer infrared singularities than the $L$-loop amplitude itself.
We argue that planar amplitudes in ${\cal N} = 4$
super-Yang-Mills theory enjoy softer than expected behavior
as $\ell \parallel p$ already at the level of the integrand.
Moreover, we conjecture that the four-point
integrand can be uniquely determined,
to any loop-order, by imposing the correct soft-behavior of the logarithm together with dual conformal invariance and dihedral symmetry.
We use these simple criteria to determine explicit formulae for the
four-point integrand through seven-loops, finding perfect agreement
with previously known results through five-loops.
As an input to this calculation, we enumerate all
four-point dual conformally invariant (DCI) integrands
through seven-loops, an analysis which
is aided by several graph-theoretic
theorems we prove about general DCI integrands at arbitrary loop-order.
The six- and seven-loop amplitudes receive non-zero contributions from
229 and 1873 individual DCI diagrams respectively.
\\
\\
PDF and {\sc Mathematica} files with all of our results are provided at
\href{http://goo.gl/qIKe8}{{\tt http://goo.gl/qIKe8}}
}

\begin{document}

\maketitle

\section{Introduction}

It has long been appreciated that
direct Feynman diagram calculations are often a very inefficient way to study
multi-loop scattering amplitudes, especially in a theory as
simple as planar ${\cal N} = 4$ super-Yang-Mills (SYM) theory.
Rather, one usually aims first to obtain a representation
of a desired amplitude
as a linear combination of a (hopefully) small number of (hopefully)
relatively simple integrals.
To date there have been at least
four essentially different technologies available
for determining an integral representation for an amplitude.
In this paper we introduce a new approach which is both conceptually simple
and computationally powerful, as we demonstrate by using it to determine
the seven-loop four-particle integrand in SYM theory.

\subsection{Brief Review of Four Paths Towards Integral Representations}

The most conventional approach is to begin with a collection of integrands, make
the ansatz that the amplitude should be a linear combination of those
integrals, and then determine the coefficients of the integrands
by matching various data---such as generalized unitarity
cuts~\cite{Bern:1994cg,Bern:1994zx,Bern:2007ct} or
leading
singularities~\cite{Buchbinder:2005wp,Cachazo:2008vp,Cachazo:2008hp,Spradlin:2008uu,Spradlin:1900zz}---between the ansatz and the amplitude.
Each `data point' generates a linear equation on the coefficients, so for
sufficiently many data points, and
a sufficiently large
set of linearly independent
integrands, one can
obtain a unique solution for the coefficients.
Actually establishing the correctness of the resulting ansatz (i.e., ruling
out the possibility of additional contributions to the amplitude which
happen to vanish on all of the data points considered) usually
requires careful analysis, for example via more complicated
$D$-dimensional unitarity cuts.
Details of this tried and true
approach can be found in the review~\cite{Bern:2011qt}, and we have
learned that its power has been exploited in as yet unpublished
work to construct an integral representation for the four-particle
amplitude at six-loops~\cite{unpublished_six_loop}.

A very different approach has been
introduced in~\cite{CaronHuot:2010zt,Boels:2010nw}, in particular
in~\cite{ArkaniHamed:2010kv}, where a relation was presented
which allows one (in principle) to write down the integrand of any
desired multi-loop amplitude in SYM theory recursively
in terms of lower-loop integrands (involving successively more particles).
Note that we distinguish here between {\it the} integrand, which is {\it a uniquely-defined rational function of internal and external kinematic data in any
planar theory}, and {\it an} integral representation, which is
understood to be well-defined only modulo
the addition of terms which integrate to zero.

A third even more recent approach relies on the duality, or equivalence,
between certain correlation functions in SYM theory and maximally helicity violating (MHV)
scattering amplitudes
at the level of the integrand~\cite{Eden:2010zz}.  Aspects of this duality
have been studied in several
papers including~\cite{Alday:2010zy,Eden:2010ce,Eden:2011yp,Eden:2011ku}.
For the special case of four particles the relevant correlation functions
possess an extra hidden symmetry which has been exploited
to provide a slick derivation of the four-particle integrand at
three-loops
in~\cite{Eden:2011we}, and at higher loops in~\cite{Eden:2012tu}.

A fourth approach makes use of the understanding of how soft- and collinear-infrared singularities factorize and exponentiate in planar gauge
theory~\cite{Akhoury:1978vq,Mueller:1979ih,Collins:1980ih,Sen:1982bt,Sterman:1986aj,Catani:1989ne,Collins:1989bt,Magnea:1990zb,Catani:1990rp,Giele:1991vf,Kunszt:1994np,Catani:1998bh,Vogt:2000ci,Sterman:2002qn}.
If we restrict for simplicity our attention to
MHV amplitudes and denote by
$M^{(L)}_n = A^{(L)}_n/A^{(0)}_n$ the ratio of the $L$-loop $n$-particle
MHV (color-stripped, partial)
amplitude to the corresponding tree amplitude, then exponentiation
implies that in dimensional regularization with $D=4 - 2\epsilon$ we have
\begin{equation}
\label{eq:log}
\log\left[1 + \lambda M^{(1)}_n + \lambda^2 M^{(2)}_n + \cdots \right]
= {\cal O}(\epsilon^{-2}),
\end{equation}
even though the individual contributions on the left-hand side
have stronger infrared (IR)
singularities, $M^{(L)}_n = {\cal O}(\epsilon^{-2L})$.
The general form of this statement is regulator-independent: for example,
in the Higgs regulator of~\cite{Alday:2009zm}, the right-hand side
of eqn.~(\ref{eq:log}) would be ${\cal O}(\log^2 m)$ while each
individual
$M^{(L)}_n$ diverges as ${\cal O}(\log^{2L} m)$
as the mass is taken to zero---a fact which has
been used to guide the construction of various
integrands in the Higgs-regularized theory~\cite{Henn:2010bk, Henn:2010ir}.

The divergent behavior of equation~(\ref{eq:log})
has often been used as an important consistency check on the correctness
of integral
representations~\cite{Bern:2006vw,Cachazo:2006tj,Bern:2006ew,Bern:2008ap,Cachazo:2008hp,Spradlin:2008uu}.
Moreover it can be used
to guide the construction of a representation
for an $L$-loop amplitude
by using the requirement that all of the
${\cal O}(\epsilon^{-2L})$ through ${\cal O}(\epsilon^{-3})$ poles
must cancel as `data' in the same sense as described above.
For example, for the four-particle amplitude at
three-loops, if one starts with the one- and two-loop
amplitudes as given, and makes the ansatz that $M_4^{(3)}$
should be a linear combination of the three-loop ladder and tennis court
diagrams,
then their coefficients can be uniquely
fixed by requiring that eqn.~(\ref{eq:log}) should hold at order $\lambda^3$.
In this case, the resulting ansatz
agrees with the correct three-loop
amplitude found in~\cite{Bern:2005iz}, though in general
this approach is not guaranteed to work,
since it cannot detect the contribution
from any individual integral whose infrared behavior is too soft.

\subsection{Preview of a New Path Towards Integrands}

The last approach outlined above has the obvious disadvantage that
it relies upon explicit evaluation of multi-loop integrals, a problem
which---despite recent progress---remains extremely difficult in general.
In this paper we explore an approach which amounts essentially to
imposing eqn.~(\ref{eq:log}) at the level of the {\it integrand}, which allows
us to work with simple rational functions instead of complicated
polylogarithm functions.
(Indeed it has been noted in~\cite{Drummond:2010mb} that soft limits
can also be handled very easily at the level of the
integrand,
in particular an $(n+1)$-particle integrand should reduce directly
to an $n$-particle integrand in the soft limit $p_n \to 0$.
This statement is not generally true for integrals due to IR-divergences.)

Infrared divergences arise when some loop momentum becomes collinear
with an external momentum $p_a$.  In terms of the dual coordinates $x_a$
defined by $p_a = x_{a+1} - x_{a}$ this happens whenever some loop integration
variable $x_A$ approaches the
line connecting $x_a$ and $x_{a+1}$ for some $a$
(see~\cite{ArkaniHamed:2010kv,ArkaniHamed:2010gh} for a detailed
discussion).
If we take a limit where both $(x_a - x_A)^2$ and $(x_{a+1} - x_A)^2$
are order $\epsilon$ (but making sure that all other kinematic
invariants take generic, non-vanishing values)
then any integrand will have
a $1/\epsilon^2$ pole in the limit $\epsilon \to 0$.
However, we claim that the integrand of the {\it logarithm} of any
amplitude has only a $1/\epsilon$ pole at any loop-order $L > 1$.
This phenomenon is clearly related to
the fact of
eqn.~(\ref{eq:log}),
but at the moment we present this claim only as an empirical observation
which we have verified for the four-point integrand
through seven loop-order.

Assuming that this behavior
holds for all $L$, then it obviously can be used as a
powerful tool for constructing integrands of amplitudes.
We simply need to postulate a suitable basis of integrands for
any desired amplitude and try to find a linear combination for
which the ${\cal O}(1/\epsilon^2)$ poles cancel.
We call this procedure the soft-collinear bootstrap because it allows the
determination of an $L$-loop integrand given all of the corresponding
integrands at loop-order $<L$, which appear in the $L$-loop logarithm.
(This approach has a similar flavor to the work of ref.~\cite{Bargheer:2009qu},
where it was argued that all tree-level amplitudes could be fully
constrained by their collinear singularities, combined with
the requirement of Yangian symmetry).

The case of four particles is special because a very simple basis naturally
suggests itself: the collection of integrands which are invariant under
conformal transformations on the dual $x$ variables, appropriately
called dual conformally invariant
(DCI) integrands~\cite{Drummond:2007cf,Drummond:2007au}.
This class of integrands has already played an important role
in guiding the construction
of integral representations at four-~\cite{Bern:2006ew}
and five-loops~\cite{Bern:2007ct}.
We have checked through seven-loops that the family of
DCI integrands at each loop-order is linearly independent (if one imposes
dihedral symmetry in the external particles, which is a symmetry of
all superamplitudes; it is curious to note that imposing mere cyclic symmetry is sufficient to fix the amplitude through five-loops, but the full dihedral symmetry must be imposed in order to obtain the six- and seven-loop amplitudes).
Moreover
they remain linearly independent even after picking-off from each one
just the
residue of its $1/\epsilon^2$ pole.
Taken together, we are therefore led to
conjecture that the four-point integrand can be uniquely
determined, to any loop-order, by imposing the ${\cal O}(1/\epsilon)$
collinear behavior together with dual conformal invariance and dihedral
symmetry.

It is of course well-known that the four-point amplitude in SYM theory
is trivial, in the sense that its logarithm
is determined exactly by the anomalous
dual conformal Ward identity~\cite{Drummond:2007au}
up to an overall multiplicative factor of the cusp anomalous
dimension $f(\lambda)$~\cite{Korchemsky:1985xj}, whose value is in turn known exactly from
other considerations~\cite{Beisert:2006ez}.
However, since $f(\lambda)$ can obviously be computed by simply
integrating the integrand,
it is amusing to note
that while (anomalous) dual conformal invariance
of the
logarithm of the amplitude is not powerful enough to fix $f(\lambda)$ alone,
our results suggest that
dual conformal invariance of the integrand is, when one also
imposes the mild ${\cal O}(1/\epsilon)$ collinear behavior together with dihedral symmetry with respect to the external particles.
Moreover despite the triviality of the amplitude,
the integrand of its logarithm is of independent interest since it
is evidently
a nontrivial function which makes other remarkable appearances,
for example as a Wilson loop expectation value in twistor
space~\cite{Mason:2010yk,CaronHuot:2010ek,Belitsky:2011zm,Adamo:2011pv}
and as (the square root of) a certain correlation function
of stress tensor
multiplets~\cite{Alday:2010zy,Eden:2010zz,Eden:2010ce,Eden:2011yp,Eden:2011ku}.

In section~\ref{sec:two} we review important facts and conventions
related to integrands in SYM theory and present our main conjectures (all of
which we have verified through seven-loops).  In section~\ref{sec:classification}
we explain the classification of DCI diagrams, which we use as a basis for
constructing the four-particle integrands.  Our results are
presented and discussed in
section~\ref{sec:results}, and the reader may find more information
at~\cite{four_point_multiloop_datafiles}.

Throughout this paper we use the momentum twistor
parameterization~\cite{Hodges:2009hk}.
Consequently
all momenta (both internal and external) are strictly
four-dimensional and so all Gram determinant conditions are automatically
satisfied.
However we suspect that when expressed in terms of the $x$'s, formulae for
four-point integrands
are actually valid
for planar SYM theory in any number of dimensions, a property which is
known to hold at least through four-loops~\cite{Bern:2010tq}.

All of the important ingredients in this paper,
including the exponentiation of
infrared divergences and dual conformal invariance, rely crucially
on planarity, but it would clearly be of great interest if some variant
of our approach might be applicable beyond the planar limit.
Integral representations for the complete four-point amplitude in
SYM theory, including
all non-planar contributions, are known through four-loops
(see for example~\cite{Bern:2012uf}).

\section{Conventions and Conjectures}
\label{sec:two}

\subsection{The Integrand}
\label{sec:theintegrand}

The integrand
of the four-point amplitude $M_4^{(L)} = A_4^{(L)}/A_4^{(0)}$
in planar SYM theory is,
at $L$-loop-order, a rational function involving only the
scalar quantities
\begin{equation}
x_{ab} = (x_a - x_b)^2,\qquad\mathrm{with}\qquad s\equiv x_{24}\quad\mathrm{and}\quad t\equiv x_{13}
\end{equation}
formed from the
dual coordinates
$x_1,x_2,x_3,x_4$ associated with the external kinematics and the $L$
loop integration variables $x_A, x_B, \ldots$.
For example at one and two-loops we have~\cite{Green:1982sw,Anastasiou:2003kj}
\begin{align}
\label{eq:oneloop}
M_4^{(1)}(x_A)
&= \frac{x_{13} x_{24}}{x_{1A} x_{2A} x_{3A} x_{4A}}\,,\\
\label{eq:twoloop}
M_4^{(2)}(x_A,x_B) &=
\frac{x_{13}^2 x_{24}}{x_{1A} x_{1B} x_{2B} x_{3A} x_{3B} x_{4A} x_{AB}} +
\frac{x_{13} x_{24}^2}{x_{1B} x_{2A} x_{2B} x_{3A} x_{4A} x_{4B} x_{AB}} +
(A \leftrightarrow B)\,.
\end{align}
In order to avoid a little bit of clutter we omit throughout
this paper the somewhat conventional overall prefactors of
$(-1/2)^L$ and $1/L!$---the latter from symmetrization with respect to the loop integration variables, and the former coming
from a choice of how to normalize the measure of integration.
In the interest of specificity, let us note that to compute amplitudes in the normalization
convention of for
example~\cite{Anastasiou:2003kj} our $L$-loop integrands should be integrated with the measure
\begin{equation}
\left( - \frac{1}{2}\right)^L \frac{1}{L!} \int \prod_{k=1}^L \frac{d^4x_{A_k}}{i \pi^2}\,.
\end{equation}
The above examples exhibit three important general properties of
integrands:
\begin{enumerate}
\item{full permutation symmetry $\mathfrak{s}_L$ in the $L$
integration variables $x_A,x_B,\ldots$;}
\item{dihedral symmetry $D_4$ in the external variables $x_1,\ldots,x_4$;}
\item{and the absence of double poles.}
\end{enumerate}

\newpage
\subsection{The Integrand of the Logarithm}

The above properties ensure that there is no ambiguity in defining
the integrand of the logarithm of an amplitude. Taylor-expanding the left-hand side of eqn.~(\ref{eq:log}) to seventh-order in $\lambda$, we have\begin{equation}
\label{eq:logtwo}
\begin{split}
(\log M_n)^{(2)} =&\phantom{\,+\,}M_n^{(2)} - \frac{1}{2} (M_n^{(1)})^2\,;\\
(\log M_n)^{(3)} =&\phantom{\,+\,}M_n^{(3)} - M_n^{(2)} M_n^{(1)} + \frac{1}{3} (M_n^{(1)})^3\,;\\
(\log M_n)^{(4)} =&\phantom{\,+\,}M_n^{(4)} - M_n^{(3)} M_n^{(1)} - \frac{1}{2} (M_n^{(2)})^2 + M_n^{(2)}(M_n^{(1)})^2 -\frac{1}{4}(M_n^{(1)})^4\,;\\
(\log M_n)^{(5)} =&\phantom{\,+\,}M_n^{(5)} - M_n^{(4)} M_n^{(1)}  -M_n^{(3)} M_n^{(2)} + M_n^{(3)}(M_n^{(1)})^2+(M_n^{(2)})^2M_n^{(1)}\\
& -M_n^{(2)}(M_n^{(1)})^3+\frac{1}{5} (M_n^{(1)})^5\,;\\
(\log M_n)^{(6)} =&\phantom{\,+\,} M_n^{(6)} - M_n^{(5)} M_n^{(1)}  -M_n^{(4)} M_n^{(2)} + M_n^{(4)}(M_n^{(1)})^2-\frac{1}{2}(M_n^{(3)})^2
\\
&+2M_n^{(3)}M_n^{(2)}M_n^{(1)}-M_n^{(3)}(M_n^{(1)})^3+\frac{1}{3}(M_n^{(2)})^3-\frac{3}{2}(M_n^{(2)})^2(M_n^{(1)})^2\\
&+M_n^{(2)}(M_n^{(1)})^4-\frac{1}{6} (M_n^{(1)})^6\,;\\
(\log M_n)^{(7)} =&\phantom{\,+\,}M_n^{(7)} - M_n^{(6)} M_n^{(1)}  -M_n^{(5)} M_n^{(2)} + M_n^{(5)}(M_n^{(1)})^2-M_n^{(4)}M_n^{(3)}\\
&+2M_n^{(4)}M_n^{(2)}M_n^{(1)}-M_n^{(4)}(M_n^{(1)})^3+(M_n^{(3)})^2M_n^{(1)}+M_n^{(3)}(M_n^{(2)})^2\\
&-3M_n^{(3)}M_n^{(2)}(M_n^{(1)})^2+M_n^{(3)}(M_n^{(1)})^4\\&-(M_n^{(2)})^3M_n^{(1)}+2(M_n^{(2)})^2(M_n^{(1)})^3-M_n^{(2)}(M_n^{(1)})^5+\frac{1}{7} (M_n^{(1)})^7\,.
\end{split}
\end{equation}
For the two- and three-loop logarithms, these expressions are to be interpreted at the level of the integrand respectively as
\begin{equation}
\begin{split}
(\log M_n)^{(2)}(x_A,x_B)&= M_n^{(2)}(x_{\left[A\right.},x_{\left. B\right]}) -
\frac{1}{2} M_n^{(1)}(x_{\left[A\right.}) M_n^{(1)}(x_{\left. B\right]})\\
&= M_n^{(2)}(x_A,x_B) -
\frac{1}{2}\Big(M_n^{(1)}(x_A) M_n^{(1)}(x_B)+M_n^{(1)}(x_B) M_n^{(1)}(x_A)\Big),
\end{split}
\end{equation}
and
\begin{equation}\label{eq:logthree}
\begin{split}
(\log M_n)^{(3)}(x_A,x_B,x_C)=&\phantom{\,+\,}M_n^{(3)}(x_{\left[A\right.},x_B,x_{\left. C\right]})-M_n^{(2)}(x_{\left[\left[A\right.\right.},x_{\left.B\right]})M_n^{(1)}(x_{\left. C\right]})\\
&+ \frac{1}{3} M_n^{(1)}(x_{\left[A\right.}) M_n^{(1)}(x_B) M_n^{(1)}(x_{\left.C\right]}) \\
=&\phantom{\,+\,}M_n^{(3)}(x_A,x_B,x_C)-\Big(M_n^{(2)}(x_A,x_B)M_n^{(1)}(x_C)\\
&+M_n^{(2)}(x_A,x_C)M_n^{(1)}(x_B)+M_n^{(2)}(x_B,x_C)M_n^{(1)}(x_A)\Big)\\
&-\frac{1}{3}\Big(M_n^{(1)}(x_A)M_n^{(1)}(x_B)M_n^{(1)}(x_C)+5\mathrm{\,other\,permutations}\Big),
\end{split}
\end{equation}
where we have used the notation $\left[A\cdots C\right]$ to denote the outer-symmetrization over indices $A,\ldots,C$ (being careful to avoid over-counting when symmetrizing the product of separately symmetrized functions).
Each of these expressions manifests the properties outlined at the end of section~\ref{sec:theintegrand}.

\subsection{The Main Conjecture}

In order to phrase our main conjecture precisely, it is best to employ
the momentum twistor variables of Hodges~\cite{Hodges:2009hk}.
These parameterize the on-shell external momenta in terms of
four-component momentum twistors $Z_a^i$, $a=1,\ldots,n$, $i=1,\ldots,4$,
which are related to
the scalar momentum invariants according to
\begin{equation}
\label{eq:mt}
x_{ab} = (x_b - x_a)^2=\frac{\ket{a\,a+1\,\,b\,b+1}}{\ket{a\,a+1}\ket{b\,b+1}} \propto \ket{a\,a{+}1\,\,b\,b{+}1}\,.
\end{equation}
Here (as usual) all indices $a,b,\ldots$ are understood modulo
$n$ and the bracket
denotes the determinant
\begin{equation}
\ket{a\,b\,c\,d} \equiv \mathrm{det}\left(Z_a Z_b Z_c Z_d\right)\,.
\end{equation}
Finally the `$\propto$' in~(\ref{eq:mt}) indicates that we will not keep track of the two-brackets in the denominator.  This is well-justified because all dependence on them drops out of any dual conformally invariant function of the $x_{ab}$.

Each off-shell loop integration variable $x_A$ is parameterized by
the antisymmetric combination $x_{A_1} \wedge x_{A_2}$
of a pair of four-component
momentum twistors. These can appear in brackets of the form
\begin{equation}
\label{eq:mttwo}
x_{aA} = (x_a- x_A)^2 \to \ket{a\,a{+}1\,A_1\,A_2}, \quad
x_{AB} = (x_A - x_B)^2 \to \ket{A_1\,A_2\,B_1\,B_2}, \quad {\rm or~etc.,}
\end{equation}
where, as forewarned above, we omit the two-brackets completely since our intention is to
only ever
use these replacements in dual conformally invariant formulae.

We can probe the infrared structure of an amplitude at the level of the
integrand by taking the limit as one of the loop integration variables
approaches the line connecting $x_a$ and $x_{a+1}$ for some $a$,
as discussed in~\cite{ArkaniHamed:2010kv,ArkaniHamed:2010gh}.
Due to the symmetry of the integrand
it is sufficient to consider the limit as $x_A$ approaches the line connecting
$x_1$ and $x_2$.
This can be accomplished by taking $Z_{A_1}$ to $Z_2$ and $Z_{A_2}$ to
any generic point which lies on the hyperplane spanned by $Z_1,Z_2,Z_3$, i.e.
\begin{equation}
Z_{A_1} \to Z_2+\mathcal{O}(\epsilon), \qquad Z_{A_2} \to \alpha Z_1 + \beta Z_2 + \gamma Z_3\,+\mathcal{O}(\epsilon).
\end{equation}
In this limit, we clearly have
\begin{equation}
x_{1A} = \ket{1\,2\,A_1\,A_2} \propto \epsilon\,, \qquad
x_{2A} = \ket{2\,3\,A_1\,A_2} \propto \epsilon\,.
\end{equation}
By `generic' we mean that we don't want to accidentally choose
$Z_{A_2}$ so that some other singularities are probed at the same time;
in particular
this means that we must take both $\alpha$ and $\gamma$ to be nonzero,
for if $\alpha$ were zero then
$x_{3A} = \langle 3\,4\,A_1\,A_2 \rangle$
would also vanish, while if $\gamma$ were zero then
$x_{4A} = \langle 4\,1\,A_1\,A_2 \rangle$
would also vanish.

It might be interesting to attempt to
glean further information about integrands
by probing these multi-collinear regions, but at the moment we content
ourselves with simple collinear limits. With this in mind, let us consider the generally safe choice of taking $\alpha,\beta,\gamma$ all to be $1$:
\begin{equation}\label{eq:mainlimit}
Z_{A_1} \to Z_2+\mathcal{O}(\epsilon), \qquad Z_{A_2} \to  Z_1 +  Z_2 +  Z_3\,+\mathcal{O}(\epsilon)\,.
\end{equation}

In this region of the $A_1$ integral, the statement that an integrand has at most an $\mathcal{O}(1/\epsilon)$-divergence can be formalized as the conjecture:
\begin{mainconjecture}
The integrand of the logarithm of the $n$-particle $L$-loop amplitude
in planar SYM theory behaves as ${\cal O}(1/\epsilon)$ in the
limit~(\ref{eq:mainlimit}) for all $L>1$.
\end{mainconjecture}

\subsection{Two- and Three-Loop Examples}\label{two_and_three_loop_examples}

The integrand of the 2-loop logarithm $(\log M_4)^{(2)}$ is the sum of
5 terms, obtained by plugging eqns.~(\ref{eq:oneloop}) and~(\ref{eq:twoloop})
into eqn.~(\ref{eq:logtwo}).
Only the three terms containing
both $x_{1A}$ and $x_{2A}$ in the denominator contribute to the ${\cal O}(1/\epsilon^2)$ pole
in the limit~(\ref{eq:mainlimit}).  Combining these three terms over a common denominator yields
\begin{equation}
(\log M_4)^{(2)}(x_A,x_B) =
\frac{x_{13} x_{24} \left( x_{1B} x_{24} x_{3A} + x_{13} x_{2B} x_{4A} - x_{13} x_{24} x_{AB}
\right)}{x_{1A} x_{1B} x_{2A} x_{2B} x_{3A} x_{3B} x_{4A} x_{4B} x_{AB}}
+ {\cal O}(1/\epsilon)\,.
\end{equation}
Turning to momentum twistors, we can plug in
\begin{align}
x_{24}&= \ket{2\,3\,4\,1} = - \ket{1\,2\,3\,4}\,,\\
x_{3A}&= \ket{3\,4\,A_1\,A_2} = - \ket{1\,2\,3\,4} + {\cal O}(\epsilon)\,,\\
x_{13}&= \ket{1\,2\,3\,4}\,,\\
x_{4A}&= - \ket{1\,2\,3\,4} + {\cal O}(\epsilon)\,,
\end{align}
so the nontrivial numerator factor becomes
\begin{equation}
x_{1B} x_{24} x_{3A} + x_{13} x_{2B} x_{4A} - x_{13} x_{24} x_{AB}
= \ket{1\,2\,3\,4}^2 \left(
x_{1B} - x_{2B} - x_{AB}
\right) + {\cal O}(\epsilon)\,,
\end{equation}
but this in turn is easily seen to vanish as a consequence of
\begin{equation}
x_{AB} = \ket{A_1\,A_2\,B_1\,B_2} = \ket{2\,1\,B_1\,B_2} + \ket{2\,3\,B_1\,B_2}
= - x_{1B} + x_{2B}\,.
\end{equation}

The same cancellation of the ${\cal O}(1/\epsilon^2)$ pole occurs at
three-loops, but let us turn the statement around by pretending for
a moment that we did not know the integrand, but were willing
to make the ansatz that it should be a linear combination,
\begin{equation}
M^{(3)}_4(x_A,x_B,x_C) = a\, I^{(3)}_1 + b\, I^{(3)}_2\, ,
\end{equation}
of the {\it only} two
available dual conformally invariant integrands
\begin{align}
\label{eq:threeladder}
I^{(3)}_1 &=
\frac{x_{13}^3 x_{24}}{x_{1A} x_{1B} x_{1C} x_{2C} x_{3A} x_{3B} x_{3C} x_{4B} x_{AB} x_{AC}} + {\rm sym.~(12~terms~total)}\,, \\
I^{(3)}_2 &=
\frac{x_{13}^2 x_{24} x_{2A}}{x_{1A} x_{1C} x_{2B} x_{2C} x_{3A} x_{3B} x_{4A} x_{AB} x_{AC} x_{BC}} +  {\rm sym.~(24~terms~total)}\,.
\label{eq:tennis}
\end{align}
Assembling eqn.~(\ref{eq:logthree})
leads to an expression for $(\log M_4)^{(3)}(x_A,x_B,x_C)$ as a sum
of 36 terms.
If we isolate those terms containing both $x_{1A}$ and $x_{2A}$ we
find that the integrand vanishes at order ${\cal O}(1/\epsilon^2)$ in
the limit~(\ref{eq:mainlimit}) only for $a = b = 1$, in accord
with the known value for the three-loop integrand~\cite{Bern:2005iz}.

\subsection{Integral Basis Conjectures for $n=4$}

As exemplified in the previous paragraph, we can use our conjecture
as a tool
for determining an integrand, but only if we first identify a
basis of integrands for constructing a suitable ansatz.
For $n>4$ there are two separate potential problems.  First of all, because MHV amplitudes for $n>4$ are {\it maximally} chiral, they cannot be represented by the parity-even quantities $x_{ab}$ alone; rather they {\it must} involve  parity-sensitive generalizations of $x_{ab}$ such as,
\begin{equation}
\ket{A_1\,A_2\,2\,4}
\end{equation}
which could perhaps be written ``$x_{\star\,A}$'' where $x_{\star}$ were defined as {\it one} (of the two) complex points in space-time simultaneously light-like separated from all four momenta $p_1,\ldots, p_4$. Secondly, the class of multi-loop integrands involving such factors tend to satisfy many {\it integrand-level} relations (that is, they generally are not linearly independent but rather form an over-complete basis of integrands).

It is well-known that
scalar box integrals provide a basis for {\it integral} representations
of all one-loop amplitudes
in SYM theory, modulo terms which integrate to zero in four
dimensions, and it has recently been shown that certain chiral octagons
provide a
basis (though overcomplete)
for all one-loop integrands~\cite{ArkaniHamed:2010gh}.
The construction of
a basis for planar two-loop integral representations has been
carried out in~\cite{Gluza:2010ws}, and it would be interesting
to explore the implications of that work for the special case of the integrand
of planar SYM theory.

However for the special case of $n=4$ we expect a very simple story:
namely,
that the integrand can be expressed
as a linear combination of rational functions of the
$x_{ab}$, each of which is invariant under dual conformal
transformations~\cite{Drummond:2006rz} (in particular, under the simultaneous inversion
$x^\mu \to x^\mu/x^2$ of all external and internal dual variables).
Indeed this property was used to guide the construction of the four-
and five-loop amplitudes in~\cite{Bern:2006ew} and~\cite{Bern:2007ct},
where such integrals were called `pseudo-conformal'.

Dual conformal invariance is broken by the dimensional regularization
traditionally used to regulate the infrared divergences of loop-amplitudes,
so one might expect that imposing dual conformal invariance at the level of
the integrand is not exactly synonymous to imposing it at the level of
the integrated amplitude.
Indeed, it was observed in~\cite{Drummond:2007aua} that only a particular
subclass of pseudo-conformal integrals actually appear, namely those
which can be rendered infrared-finite by evaluating them in a particular
scheme (off-shell regularization) which manifestly preserves dual
conformal invariance.
We use the abbreviation DCI to refer to an integrand which is both
pseudo-conformal in the
sense of~\cite{Bern:2007ct} and finite when regulated off-shell as
in~\cite{Drummond:2007aua}.

The observation of~\cite{Drummond:2007aua} through five-loops together
with our analysis through
seven-loops motivates us to make the following conjecture:

\begin{weakbasisconjecture}
At any loop-order, the integrand of the four-particle amplitude in planar
SYM theory can be expressed as a unique linear combination of dual conformally
invariant integrands involving only the quantities $(x_a-x_b)^2$.
\end{weakbasisconjecture}

\noindent
Let us emphasize again that this conjecture
makes two logically separate claims:
first of all, that for any $L$, the integrand can be expressed
in terms of this particular class of objects;
and secondly, that this collection
is linearly independent for any $L$.
Also, we note that this conjecture is stronger than a corresponding
statement about dual conformally invariant {\it integrals}.  For example,
if two quantities differ by something which integrates to zero, they
would be linearly independent as integrands but not as integrals.

In order for the method to have maximal power, which is to say
in order for it to be able to uniquely determine the coefficient of
every single DCI integrand, it is necessary for us to make the slightly stronger conjecture:

\begin{strongbasisconjecture}
At any loop-order, the collection of DCI integrands
remains linearly independent
if we isolate from each one the residue of the $1/\epsilon^2$
pole in the limit~(\ref{eq:mainlimit}).
\end{strongbasisconjecture}

\noindent
We have verified that this is true through seven-loops by explicit calculation.

\section{A Basis for the Integrand}
\label{sec:classification}

In this section we outline the classification of the dual conformally invariant
diagrams we use as a
(conjectured) basis for constructing four-particle integrands.

\subsection{Pseudo-Conformal Diagrams}
\label{sec:pseudo}

Since all quantities constructed from the $x_{ab}$'s are automatically
invariant under translations and rotations in $x$-space, the only nontrivial
constraint from dual conformal invariance arises from imposing invariance
under inversions, $x^\mu_a \to x^\mu_a/x^2$, which takes
\begin{equation}
\label{eq:inversion}
x_{ab} = (x_a - x_b)^2 \to \frac{(x_a - x_b)^2}{x_a^2 x_b^2} =
\frac{x_{ab}}{x_a^2 x_b^2}\,, \qquad
d^4 x \to \frac{d^4 x}{(x^2)^4}\,.
\end{equation}
In order to check whether a given rational function of the $x_{ab}$'s is
invariant under this transformation, one simply has to check that all of the
$x_a^2$ factors which accumulate as a consequence of eqn.~(\ref{eq:inversion})
cancel out.  This is tantamount to simply counting how many times a given
index appears in the denominator and numerator.  For example, for the
first term of the
two-loop integrand~(\ref{eq:twoloop}),
\begin{equation}
\frac{x_{13}^2 x_{24}}{x_{1A} x_{1B} x_{2B} x_{3A} x_{3B} x_{4A} x_{AB}}\,,
\end{equation}
we note that indices 1 and 3 each appear twice both upstairs and downstairs,
while indices 2 and 4 each appear once both upstairs and downstairs, so all
of their associated $x_a^2$ weights cancel out.  At the same time the
indices $A$ and $B$ associated to the integration variables each appear
four times downstairs, which is expected as their weights should cancel
against the factors arising from transforming the measure $d^4 x_A\,d^4 x_B$.

\begin{figure}[t]\caption{Dual conformally invariant diagrams through three-loops.
We use the standard notation where a dotted line connecting two faces
$x_a,x_b$ corresponds to a numerator factor $x_{ab}$, but for clarity
we omit from each diagram an overall factor of $x_{13} x_{24}$.
These four diagrams correspond to the rational functions
given in eqns.~(\ref{eq:oneloop}), (\ref{eq:twoloop}), (\ref{eq:threeladder})
and~(\ref{eq:tennis}) respectively.}
\begin{center}\includegraphics[scale=1]{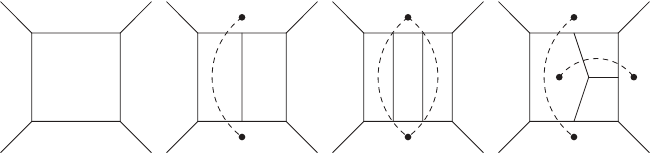}\vspace{-0.4cm}\end{center}
\label{fig:firstfour}
\end{figure}

At this level of analysis we are looking simply at the na\"{i}ve transformation
of integrands under~(\ref{eq:inversion}), with no consideration given
to the question of infrared divergences which arise when the integral is
actually performed.
Following~\cite{Bern:2007ct} we therefore refer to any quantity invariant
under~(\ref{eq:inversion}) as pseudo-conformal.
In figure~\ref{fig:firstfour} we review
a standard diagrammatic notation for expressing the class of
pseudo-conformal quantities of interest.
(However at high loop-order we abandon this notation because the
proliferation of numerator factors renders most diagrams illegible.)
Two minor differences with respect to the standard
notation are that: (1) for clarity we omit from each diagram an overall factor
of $x_{13} x_{24}$ (we prove in Lemma~\ref{lemma:one} of appendix~A that every
four-point DCI diagram has
such an overall factor); and (2) for each diagram we impose (as emphasized
above)
by construction
that the resulting
rational function of $x_{ab}$'s should have $D_4 \times \mathfrak{s}_L$ symmetry and
should be normalized so that the numerical factor in front of each
{\it distinct} term in the sum is precisely $1$.

The translation between a diagram and its associated rational function of
$x_{ab}$'s proceeds via the following simple steps.
We label the four external faces
$x_1,x_2,x_3,x_4$ in cyclic order and then label the $L$ internal faces
of an $L$-loop diagram as $x_A,x_B,\ldots$.
For example, let us choose to label the fourth diagram in
figure~\ref{fig:firstfour} as:
\begin{center}
\includegraphics[scale=0.666667]{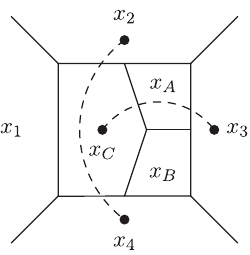}
\end{center}
Each propagator bounded by
faces $x_a$ and $x_b$ is associated
as usual to a factor $1/x_{ab}$, while each dotted
line connecting faces $x_a$ and $x_b$ is associated to
a numerator factor $x_{ab}$.
(Note that, as in the third diagram in figure~\ref{fig:firstfour}, there may be $k>1$
dotted lines connecting the same two faces, in which case the numerator
factor $x_{ab}$ is raised to the $k$-th power.)
This diagram has 10 propagators and 2 numerator factors, so according
to the rules it gives the factor
\begin{equation}
\label{eq:aaa}
x_{13} x_{24} \frac{x_{24} x_{3C}}{x_{1C} x_{2A} x_{2C} x_{3A} x_{3B} x_{4B}
x_{4C} x_{AB} x_{AC} x_{BC}}\,,
\end{equation}
where we have restored the overall factor of $x_{13} x_{24}$ mentioned above.

The final step is to impose $D_4 \times \mathfrak{s}_3$ symmetry
by summing the quantity~(\ref{eq:aaa}) over all dihedral transformations
of the external $x_1,x_2,x_3,x_4$ and all permutations of the internal
$x_A,x_B,x_C$.
Applying this procedure to~(\ref{eq:aaa}), and keeping in mind our choice
to normalize each distinct term with a factor of $1$,
leads immediately to the expression given in~(\ref{eq:tennis}).
It is important to note that because we always impose the $D_4 \times \mathfrak{s}_L$
symmetry at the level of the integrand, the choice of explicit loop momenta labeling `$x_A, x_B,\ldots$' in these figures is largely irrelevant: they only serve as a representative with which we may present a particular numerator explicitly; any other choice would have been acceptable, as it would also generate the same rational function after symmetrization. (The labels of loop momenta with only four propagators are particularly irrelevant, as they do not appear in the numerator at all, and so are never needed to specify an explicit, representative integrand).

\subsection{Dual Conformally Invariant Diagrams}
\label{sec:divergent}

It is known that there are $1,1,2,10,59$ distinct pseudo-conformal
diagrams respectively at loop-orders $L=1,2,3,4,5$, respectively.  However, only
$1,1,2,8,34$ of those diagrams actually enter the $L$-loop four-particle
integrand with nonzero coefficient; the remaining 2 four-loop and 25 five-loop
diagrams do not appear in the amplitude.
In~\cite{Drummond:2007aua} it was observed that a precise characterization
can be given which distinguishes the contributing versus non-contributing
diagrams: the former consist of those pseudo-conformal diagrams which
are rendered infrared finite when they are evaluated off-shell, i.e.~when
we take the external momenta $p_i$ to satisfy $p_i^2 \ne 0$.
(Note that there are additional types of DCI diagrams one can draw
when off-shell legs are allowed.  These have been classified
through four-loops in~\cite{Nguyen:2007ya}, but these additional
diagrams play no role in our present analysis.)

To illustrate this point let us consider the simplest five-loop
diagram which is pseudo-conformal but not DCI,
\begin{center}
\includegraphics[scale=0.666667]{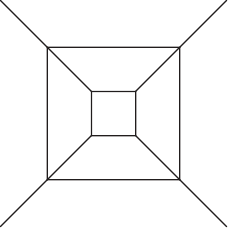}
\end{center}
whose associated integrand is
\begin{equation}
\frac{x_{13} x_{24}}{
x_{1E} x_{2C} x_{3D} x_{4B} x_{AB} x_{AC} x_{AD} x_{AE} x_{BD} x_{BE}
x_{CD} x_{CE}} + {\rm sym.~(120~terms~total)}\,.
\end{equation}
Let us focus on the region of integration where all four of $x_B,x_C,x_D$ and
$x_E$ simultaneously approach $x_A$.  Specifically let us suppose
that $x_{AB}, x_{AC}, x_{AD}, x_{AE}$ are all order ${\cal O}(\rho^2)$ as $\rho \to 0$.
In this limit there are eight vanishing propagators, giving a factor of
$\rho^{16}$ in the denominator, and the integral behaves as
\begin{equation}
\sim
\int \frac{x_{13} x_{24}}{x_{1A} x_{2A} x_{3A} x_{4A}} \frac{d^4 x_A\ 
\rho^{15}\,d\rho}{\rho^{16}} + \cdots\,,
\end{equation}
where the factor $\rho^{15}\,d\rho$ in the numerator comes from switching
to radial coordinates for $x_B,x_C,x_D,x_E$ near the point where all of
these equal $x_A$.  Evidently this integral has a divergence near
$\rho=0$.

In contrast none of the diagrams shown in figure~\ref{fig:firstfour}, nor
any of the 8 (34) diagrams which are known to contribute to the 4 (5)-loop
integrands, suffer from any divergences of this sort.
We refer to pseudo-conformal diagrams which possess this property as {\it genuinely
dual conformally invariant} (DCI) diagrams.
Let us emphasize that this stands as an empirical observation, as there is
no rigorous understanding of why this peculiar fact
should play any important
role.  In particular let us emphasize that DCI diagrams most certainly do
have infrared singularities when evaluated in any of the familiar
regularization schemes such as dimensional regularization
or the Higgs regulator of~\cite{Alday:2009zm}.
Nevertheless, appreciating the utility of this finiteness criterion through
five-loops, we adopt the conjecture that it continues to hold to all-loop
order.

\subsection{Classification of Four-Point DCI Diagrams Through Seven-Loops}

In this section we explain the steps taken to classify
all DCI diagrams through seven-loops.
The classification takes place in two steps:  we first used an
off-the-shelf program to generate all relevant distinct planar topologies,
and then wrote our own code to determine, for each topology,
all independent sets of
numerator
factors which render the diagram dual conformally invariant.
The vast majority of planar graphs (in particular, any which contain
a triangle---
an internal face with only three edges) cannot be rendered DCI
with any
numerator factor, and can be rejected immediately.  On the other hand,
beginning at five-loops there are examples of planar graphs
which are rendered dual conformally invariant
by more than one inequivalent choice of numerator
factor (the three occurrences of this phenomenon at five-loops are evident in fig.~\ref{five_loop_integrand}).
Counting diagrams with different numerator factors separately
we find $1,1,2,8,34,256,2329$ distinct four-point DCI diagrams
at one- through seven-loops respectively.  These come from
$1,1,2,8,30,197,1489$ distinct underlying topologies (ignoring numerators).
These results are summarized in table~\ref{integrand_statistics}.

\begin{table}[t]\vspace{-0cm}\caption{Statistics of four-point DCI integrands through seven-loops.\label{integrand_statistics}}\vspace{0.1cm}
\begin{tabular}{|l|r|r|r|r|r|r|}
\cline{1-7}\multirow{2}{*}{{\bf Loop}}&\multicolumn{1}{c|}{{\bf \# of denominator}}&\multicolumn{1}{c|}{{\bf \# of distinct}}&\multicolumn{4}{c|}{{\bf \# of integrands with coefficient:}}\\\cline{4-7}
&\multicolumn{1}{c|}{{\bf topologies}}&\multicolumn{1}{c|}{{\bf DCI integrands}}&$\qquad\mathbf{+1}$&$\qquad\mathbf{-1}$&$\qquad\mathbf{+2}$&$\qquad\mathbf{0}$\\\hline
1&1&1&1&0&0&0\\\hline
2 & 1 & 1 & 1 & 0 & 0 & 0 \\\hline
3 & 2 & 2 & 2 & 0 & 0 & 0 \\\hline
4 & 8 & 8 & 6 & 2 & 0 & 0 \\\hline
5 & 30 & 34 & 23 & 11 & 0 & 0 \\\hline
6 & 197 & 256 & 129 & 99 & 1 & 27\\\hline
7 & 1489 & 2329 & 962 & 904 & 7 & 456\\\hline
\end{tabular}\vspace{-0.2cm}
\end{table}

The first step in the classification requires the choice of a graph generating program.
The popular
choice {\tt qgraf}~\cite{Nogueira:1991ex,qgraf}, is not well-suited
to the present application because of our interest exclusively in plane
graphs.
Note that at
this point we must begin to be precise about distinguishing
a `{\it planar} graph', which is one that {\it admits} an embedding onto the plane
with no self-intersection, from a `{\it plane} graph', which is a planar graph
together with a {\it specific choice} of plane-embedding.  This distinction is important
because a given planar graph can have more than one inequivalent embedding.
Two plane graphs which correspond to the same
underlying planar graph would be considered
identical in scalar field theory but must be distinguished in
gauge theory because they have different color index flow.

This rather subtle distinction first becomes necessary at six-loops,
where we find the candidate planar graph shown in fig.~\ref{fig:crazy} which admits
two inequivalent plane-embeddings, only one of which can be made pseudo-conformal
with a suitable numerator factor.
Note that in this case the pseudo-conformal graph on the left
is not DCI (it fails the finiteness criterion explained in
section~\ref{sec:divergent}).  We strongly suspect that it is not
possible for any genuine DCI diagram to admit an inequivalent
embedding in the plane, but this notion plays no role in our analysis and
we only mention this phenomenon in passing as a curiosity.

\begin{figure}[b]\begin{center}
\includegraphics[scale=0.45]{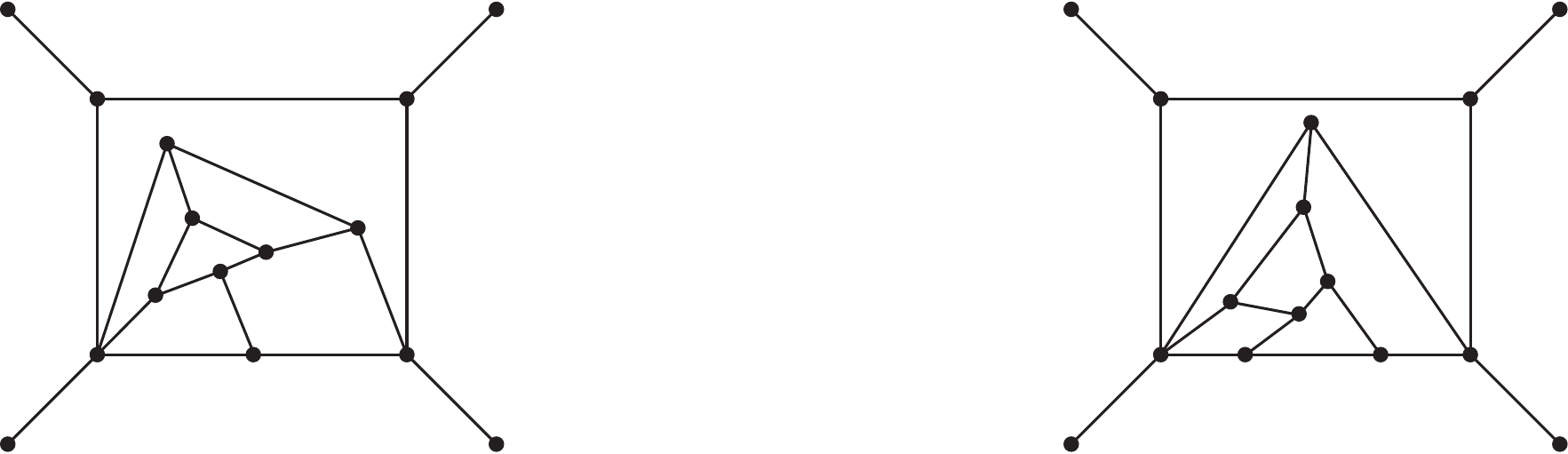}\vspace{-0.4cm}
\end{center}\caption{Two different plane embeddings of the same planar graph (they are related to each other by
rotating the bottom edge of the square by 180 degrees out of the plane of the paper while leaving
the other three edges fixed).
The graph on the left can be made pseudo-conformal with a suitable numerator factor while the graph
on the right cannot.}

\label{fig:crazy}
\end{figure}

Returning to our main point:
in order to both avoid wasting effort generating non-planar graphs in
which we are not interested, and to be sure of generating
all possible inequivalent plane graphs, we turn to the specialized
program {\tt plantri}~\cite{plantripaper,plantri} which is capable of exactly
what we require.  The program has no concept of external lines, so
to represent a four-point diagram in {\tt plantri} we first `fold it up'
to a plane graph with $(L+4)$ faces by
tying the four external lines together at
a new `vertex at infinity.'
A given $(L+4)$-face plane graph, together with the choice of a
quadrivalent vertex in the graph, may then be `unfolded' to
a planar four-point diagram.  The process of folding and unfolding is
not bijective since on the one hand two inequivalent four-point graphs may become
identical when folded up, while on the other hand a given plane graph with $k>1$
quadrivalent vertices may unfold to fewer than $k$ topologically distinct four-point
diagrams.
The latter happens when a plane graph has a discrete
symmetry which exchanges two (or more) of its quadrivalent vertices.

To generate all topologies of interest we:
\begin{enumerate}
\item{used {\tt plantri} to enumerate all
plane graphs $G_1,G_2,\ldots$ with vertex connectivity
$\ge 1$ and at least one quadrivalent vertex;}
\item{unfolded each graph $G_i$ at each of its quadrivalent vertices to obtain
various four-point diagrams $G_i^{(1)}, G_i^{(2)},\ldots$ (taking
care to remove any duplicates, as mentioned above);}
\item{from the collection of diagrams so generated, we excluded any
which have internal triangles, since these cannot possibly be made DCI.}
\end{enumerate}
It turns out that the {\tt plantri} output is most naturally sorted not by $L$ but by the number of
vertices in the graph.  An $L$-loop DCI graph can have up to
$2L+3$ vertices (including the vertex at infinity),
so in order to classify all diagrams through seven-loops we need all graphs
with up to 17 vertices.
The reader may be interested to know that we found
exactly
\begin{equation}
1,0,1,1,6,13,71,308,1637,8421,45229,243220,1326433
\end{equation}
`DCI candidate' plane graphs with 5 through 17 vertices respectively.
By a DCI candidate we mean a plane graph which has at least one
quadrivalent vertex with the property that unfolding the graph at that
point leads to a triangle-free four-point diagram for the denominator.

The four-point diagrams obtained from these DCI candidates were then
fed into a custom Java program which scanned through all possible
inequivalent
numerator factors, selecting those which make the diagram both pseudo-conformal
according to section~\ref{sec:pseudo} and finite according
to the criterion reviewed in section~\ref{sec:divergent}.
Many of the DCI candidates do not admit any valid numerator,
while a few admit more than one inequivalent choice of numerator.
The final output from our code program was a list of the $1,1,2,8,34,256,2329$
distinct DCI integrands through seven-loops advertised above; refer also to table~\ref{integrand_statistics}.

(These numbers include 1 6-loop and 2 7-loop four-point diagrams
with vertex connectivity 1, which have vertex connectivity 2 when
folded up.  These diagrams do not contribute to scattering amplitudes.
The 6-loop example was discussed in
figure~12 of~\cite{Bern:2007ct}.)

As part of this work we in fact generated all DCI
candidate plane graphs through 18
vertices, but found it computationally prohibitive to tackle 19,
which would have been necessary for a complete classification at eight-loops.
While an $L$-loop DCI graph can have as many as $2L+3$ vertices,
it can on the other hand have as few as $L + 4$ (examples saturating
this lower bound exist for $L > 3$).
Therefore, although our classification of DCI diagrams is complete
only through seven-loops, the data we have amassed contains
an enormous
number of additional DCI diagrams through fourteen-loops, which we have
not yet analyzed.

\section{Bootstrapping the Four-Point Integrand Through Seven-Loops}
\label{sec:results}

The way in which the four-point multi-loop integrand can be obtained via the soft-collinear bootstrap was illustrated for the three-loop integrand in section \ref{two_and_three_loop_examples}. The condition that $(\log M_4)^{(L)}$ has at most an $\mathcal{O}(1/\epsilon)$-pole in the soft-collinear limit (\ref{eq:mainlimit}) is equivalent to the condition that the {\it numerator} of $(\mathrm{log}~M_4)^{(L)}$ (when all contributing terms are assembled over a common denominator) must vanish in this limit (as the denominator is certainly proportional to $\epsilon^2$).  With this in mind, the procedure for obtaining the $L$-loop integrand $M_4^{(L)}$ from the soft-collinear bootstrap can be carried out as follows:
\begin{enumerate}
\item expand the log of the $L$-loop amplitude---as illustrated in eqn.~(\ref{eq:logtwo})---and compute the contribution of each term involving strictly lower-loop amplitudes to the overall numerator in the soft-collinear limit (\ref{eq:mainlimit}); call this `$\mathrm{log}_{\mathrm{rest}}$';
\item find {\it all} $L$-loop DCI integrands $I_{i}^{(L)}$ (with each one fully symmetrized with respect to permutations of the loop momentum variables $A_1,\ldots, A_L$, and the $D_4$ dihedral symmetry of the external momenta), and compute the contribution of each to the overall numerator in the soft-collinear limit (\ref{eq:mainlimit}); call these contributions $(I_i^{(L)})_{\mathrm{soft}}$;
\item $\mathrm{log}_{\mathrm{rest}}$ and the $(I_i^{(L)})_{\rm soft}$ are of course still nontrivial polynomials in various $x_{ab}$'s, and our goal is now to determine a collection of {\it numerical constants} $c_i$ such that \[\mathrm{log}_{\mathrm{rest}} + \sum_i c_i(I_i^{(L)})_{\mathrm{soft}}=0.\]  By repeatedly evaluating this equation at sufficiently many random independent values of the remaining variables ($x_{A_2},\ldots,x_{A_L}$ as well as $x_1,\ldots,x_4)$ this equation can be turned into a linear system on the $c_i$ with constant coefficients.  The existence and uniqueness of the solution, which we find through seven-loops, are the central miracles of this paper.
\end{enumerate}

Our conjecture is that this general strategy extends to all loop-orders.  This procedure is simple enough to automate, and can be implemented efficiently enough to find all integrands through six-loops in less than 2 minutes on a sufficiently powerful computer (given knowledge of the $L$-loop DCI integrands $I_i^{(L)}$).

For the sake of providing a uniform reference for the reader, the results for four and five-loops are reviewed in figures~\ref{four_loop_integrand}, and~\ref{five_loop_integrand}, respectively.
For six- and seven-loops, there are too many contributions to be practically
included graphically here, but they may be viewed at~\cite{four_point_multiloop_datafiles}.

Beyond five-loops, there are some interesting features worth noting, which are summarized in table \ref{integrand_statistics}. For example, although all 34 five-loop DCI integrands contribute to the amplitude, this is not generally the case: at six-loops, 27 of the 256 DCI integrands have vanishing coefficients; and at seven-loops, 456 of the 2329 DCI have vanishing coefficients. The 27 six-loop DCI integrands with vanishing coefficient are illustrated in figure~\ref{six_loop_coeff_0}.

More curiously, at six-loops there is the first appearance of a coefficient found to be `$+2$'; and at seven-loops, seven integrands have coefficient $+2$. These are illustrated in figures~\ref{six_loop_coeff_2} and \ref{seven_loop_coeff_2}, respectively.
It would be fascinating to see if this pattern of coefficients could be understood
along the lines of the method proposed in~\cite{Cachazo:2008dx}.
\begin{figure}[b]
\begin{center}\includegraphics[scale=0.45]{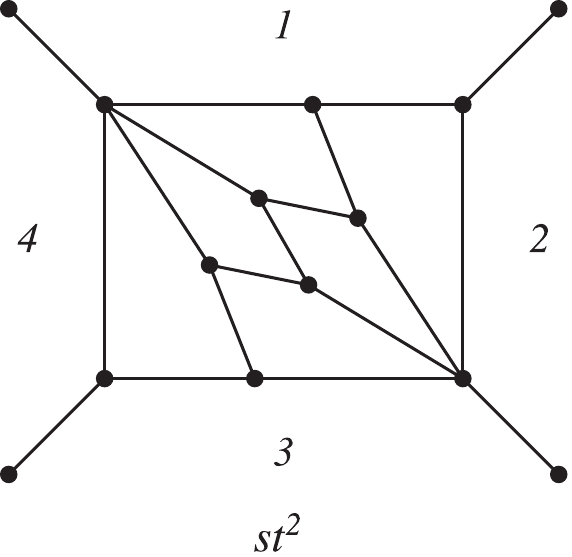}
\caption{The only six-loop integrand contributing with coefficient $+2$.\label{six_loop_coeff_2}}
\end{center}
\end{figure}

\begin{figure}[t]
\caption{The 7 seven-loop contributions with coefficient $+2$.\label{seven_loop_coeff_2}}~\\
\noindent\mbox{\hspace{-0.05\textwidth}
\includegraphics[width=1.1\textwidth]{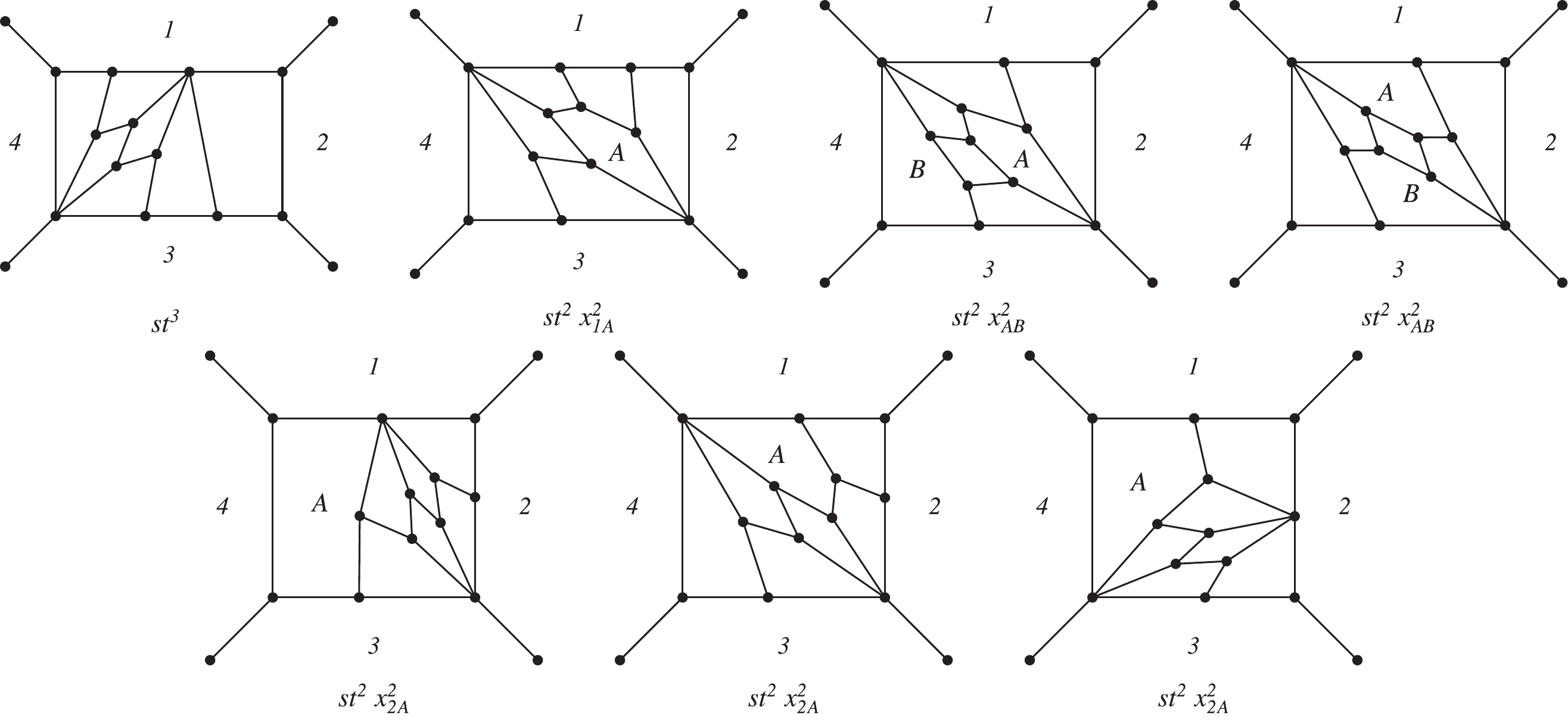}}
\end{figure}

One available and relatively simple partial check on our results involves two-particle cuts, which
have been shown to iterate to all orders in SYM theory~\cite{Bern:1997nh}.  This means that if there is
an $L$-loop contribution to the integrand with the property that it can be reduced to a product
of an $L-1$-loop contribution and the four-particle tree amplitude by cutting only two internal propagators,
then that $L$-loop diagram must inherit the same coefficient as its $L-1$-loop ancestor.
At two- and three-loops all DCI integrands have two-particle cuts; the first examples without two-particle
cuts are the second and third graphs in fig.~\ref{four_loop_integrand} (which therefore are the first graphs ``allowed''
by two-particle cut considerations
to have coefficients other than $+1$).
We have verified the consistency of our results with this principle; at seven-loops we find that 814 of the
2329 coefficients correspond to integrals which admit a two-particle cut.

\appendix

\section{Appendix}

In this appendix we collect a few useful graph
theoretic results about
DCI diagrams.

\begin{lemma}
\label{lemma:one}
Every four-point DCI diagram has an overall factor of $x_{13} x_{24}$.
\end{lemma}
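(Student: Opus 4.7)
I would prove the lemma by power-counting the integrand in the sub-region where every loop variable simultaneously approaches a single external point, and then invoking the off-shell finiteness condition that distinguishes a \emph{genuinely} DCI integrand from one that is merely pseudo-conformal.

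First, consider the sub-region in which $x_A \to x_1$ for every internal $A$, and parameterize $x_A = x_1 + \epsilon u_A$ with generic bounded $u_A$. The radial measure is $d^{4L}x \sim \epsilon^{4L-1}\,d\epsilon\,d\Omega$. With the external momenta off-shell, $x_{bA}$ for $b=2,3,4$ tends to the nonzero value $x_{1b}$, so the only factors that vanish in this limit are $x_{1A} \sim \epsilon^2 u_A^2$ for each internal $A$ and $x_{AB} \sim \epsilon^2 (u_A-u_B)^2$ for each pair. Denoting by $P_{1,I}, N_{1,I}$ the numbers of denominator/numerator factors joining $x_1$ to internal faces, and by $P_{II}, N_{II}$ the corresponding internal-internal counts, the integrand scales as $\epsilon^{2[(N_{1,I}-P_{1,I})+(N_{II}-P_{II})]}$. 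Off-shell finiteness against $\int_0 \epsilon^{4L-1}\,d\epsilon$ then demands the strict inequality
\[
(P_{1,I}-N_{1,I}) + (P_{II}-N_{II}) < 2L.
\]

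Next, I would collapse this inequality using the DCI weight constraints. DCI at the face $x_1$ gives $P_{13}+P_{1,I}=N_{13}+N_{1,I}$, so $P_{1,I}-N_{1,I}=N_{13}-P_{13}$. Summing the internal-face identities $P_A=N_A+4$ yields $2(P_{II}-N_{II})=(N_{EI}-P_{EI})+4L$, with $P_{EI}, N_{EI}$ the total external-internal counts, while summing the external identities $P_a=N_a$ yields $N_{EI}-P_{EI}=2(P_{EE}-N_{EE})$. Combining these, $P_{II}-N_{II}=(P_{13}+P_{24})-(N_{13}+N_{24})+2L$, and substituting into the finiteness inequality makes it telescope to $N_{24}>P_{24}$. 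Since both quantities are non-negative integers, $N_{24}\geq 1$. Applying the identical analysis to the sub-region $x_A\to x_2$ produces $N_{13}>P_{13}$, hence $N_{13}\geq 1$. Every monomial of the (symmetrized) DCI integrand therefore carries at least one factor of each of $x_{13}$ and $x_{24}$, and so $x_{13}x_{24}$ is an overall factor of the diagram.

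The main subtlety is the strict-versus-non-strict distinction in the inequality: off-shell finiteness must exclude even the borderline logarithmic divergence $\int_0 \epsilon^{-1}\,d\epsilon$, and this strict gap is exactly what promotes the derived inequality $N_{24}>P_{24}$ to $N_{24}\geq 1$. The argument relies essentially on the off-shell-finiteness criterion of section~\ref{sec:divergent}---that is, on the full genuine-DCI property rather than merely pseudo-conformality---so a proof attempting to use only pseudo-conformal weight-counting would be stuck at $N_{13}+N_{24}\geq 2$ without ruling out the asymmetric possibilities.
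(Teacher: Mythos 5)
Your proof is correct and is essentially the paper's own argument: both probe the off-shell divergence of the region in which all loop points collapse onto a single external dual point, and both combine the resulting strict power-counting inequality with the pseudo-conformal weight conditions to force a numerator factor of $x_{24}$ (resp.\ $x_{13}$), which survives symmetrization since $x_{13}x_{24}$ is $D_4$-invariant. The only difference is bookkeeping --- the paper starts from the exactly log-divergent all-points-collapse limit and removes external faces one at a time, whereas you evaluate the degree of divergence directly from the summed face-weight identities --- but the content is identical.
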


\begin{proof}
We will show that the overall factor of $x_{13} x_{24}$ is required in order
to pass the finiteness criterion of section~\ref{sec:divergent}.
Consider an $L$-loop pseudo-conformal diagram with $P$ propagators (internal edges).
By dimensional analysis there must be a total of
$P-2L$ numerator factors.
Now consider the behavior of the integrand
in the limit when all of the $x$'s (both internal and external) approach a common point $y$.
If $(x - y)^2 = \mathcal{O}(\rho^2)$ for all $x$'s then the integrand scales like
\begin{equation}
\label{eq:logdiv}
\sim \frac{\rho^{2(P-2L)}}{\rho^{2P}} \rho^{4L-1} d\rho = \frac{d \rho}{\rho}
\end{equation}
where we have included the appropriate factor $\rho^{4L-1} d\rho$ from the integration measure
in radial coordinates near $\rho=0$.

Of course the four $x$'s associated with the external faces are supposed
to be fixed, so the fact that we have a divergence in this hypothetical limit is not yet of concern.
If we remove a single external face $x_a$ from consideration (that is, hold $x_a$ fixed instead of letting it approach $y$),
then the propagators adjacent to $x_a$ no longer contribute a factor of $\rho^{-2}$ each to eqn.~(\ref{eq:logdiv}), but
the numerator factors attached to $x_a$ also no longer contribute a factor of $\rho^{+2}$.  Since there are an equal number
of each, there is no change in the conclusion and we still have a log-divergent integral.

If we remove two or more external faces from consideration this analysis is unchanged unless at some step we encounter
a numerator factor $x_{ab}^m$ connecting two external edges.  In order to avoid counting the no-longer-contributing
factor of $\rho^{2m}$ twice we should multiply eqn.~(\ref{eq:logdiv}) by $\rho^{2m}$, which renders the integral finite for any $m>0$.

Therefore, by the time we've excluded three external faces from the limit (thereby reaching an honest limit of the integral),
we must have encountered at least
one numerator factor between two external faces in order to avoid a divergent integral.  Since the diagram must converge for
every possible limit in order to be DCI, we conclude that for any choice of three external faces $x_a,x_b,x_c$, two of them must
always share a numerator factor (that is, there must be a positive power of at least one of $x_{ab}$, $x_{bc}$ or $x_{ca}$).
Since $x_{ab} = 0$ if $a$ and $b$ are adjacent, the only way this is possible is for the diagram to contain
an overall factor of $x_{13} x_{24}$.
\end{proof}

\begin{figure}\caption{A degenerate graph is one in which two external legs are attached to the same vertex.
Such graphs can be pseudo-conformal (although this example is not), but they cannot be DCI.}
\begin{center}
\includegraphics[scale=0.4]{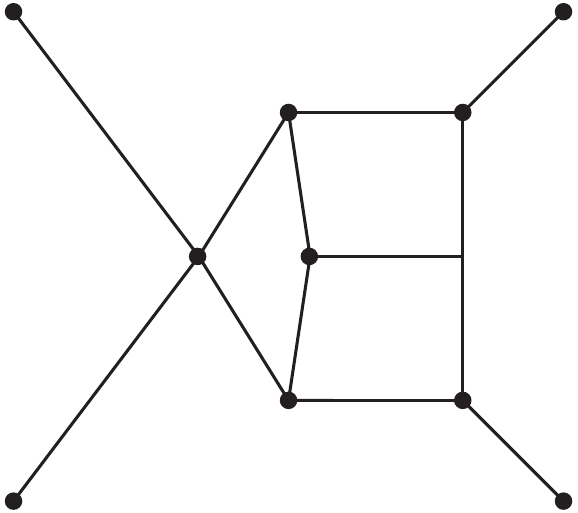}\vspace{-0.4cm}
\end{center}
\label{fig:degenerate}
\end{figure}

Since each of the four external faces $x_1,\ldots,x_4$ appears at least once in the numerator of every DCI integrand,
and conformal invariance requires that each one appears exactly as often in the denominator as in the numerator, we have
the immediate

\begin{corollary}
Degenerate diagrams (see figure~\ref{fig:degenerate}) cannot be DCI.
\end{corollary}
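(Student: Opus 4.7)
The plan is to derive the corollary as an essentially immediate consequence of Lemma~\ref{lemma:one} together with the inversion-weight counting already spelled out in Section~\ref{sec:pseudo}. First, I would invoke Lemma~\ref{lemma:one} to conclude that every four-point DCI integrand contains an overall numerator factor of $x_{13}x_{24}$, so that each of the four external indices $a\in\{1,2,3,4\}$ appears at least once in the numerator of the associated rational function of the $x_{ab}$'s.

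Second, I would appeal to the dual conformal weight-counting principle, namely that under the inversion (\ref{eq:inversion}) each external point $x_a$ acquires a factor of $x_a^2$ for every numerator occurrence of the index $a$ and a factor of $1/x_a^2$ for every denominator occurrence; hence invariance forces these two counts to agree. Combined with the previous step, this implies that every one of $x_1,x_2,x_3,x_4$ must appear in at least one propagator factor of the form $1/x_{ab}$. Equivalently, in the diagrammatic language of Section~\ref{sec:pseudo}, every external face must share at least one internal edge with some other face of the graph.

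The final step is to observe that this condition is precisely what a degenerate diagram violates. If two consecutive external legs $p_a$ and $p_{a+1}$ are attached to a common vertex, then the dual face $x_{a+1}$ lying between them in the planar embedding is bounded only by those two external lines meeting at that single vertex, with no internal propagator on its boundary. Consequently the index $a+1$ cannot appear in any denominator factor $x_{a+1,b}$, contradicting the conclusion of the previous paragraph. Thus no degenerate diagram can be DCI.

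There is no real obstacle here: the entire argument is bookkeeping, once Lemma~\ref{lemma:one} is in hand. The only subtlety worth being careful about is the translation between the original Feynman-diagram notion of ``two external legs attached to the same vertex'' and the dual-graph statement that some external face has no internal edge on its boundary; I would state this correspondence explicitly (as in figure~\ref{fig:degenerate}) so that the reader sees immediately why the conformal-weight obstruction bites.
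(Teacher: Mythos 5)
Your argument is exactly the one the paper gives: Lemma~\ref{lemma:one} forces each external index to appear in the numerator, inversion-weight counting then forces it to appear in the denominator, and a degenerate diagram has an external face bounded by no propagator, which is a contradiction. Correct, and essentially identical to the paper's (one-sentence) derivation.
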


\section*{Note Added}

While we have checked our conjecture explicitly through seven loops, we are aware that starting at eight loops there are DCI integrals which do not diverge as $\mathcal{O}(1/\epsilon^2)$  in the limit~(\ref{eq:mainlimit}) and are therefore not detectable using our method. It would be interesting to see if there are further criteria that could determine the coefficients of these potential contributions to the four-particle amplitude at higher loop orders.

\begin{figure}\begin{center}\caption{The integrands which generate the four-loop amplitude.\label{four_loop_integrand}}~\\\hspace{-0.85cm}\begin{tabular}{m{4cm}@{}m{4cm}@{}m{4cm}@{}m{4cm}@{}m{4cm}}
\includegraphics[width=3.75cm]{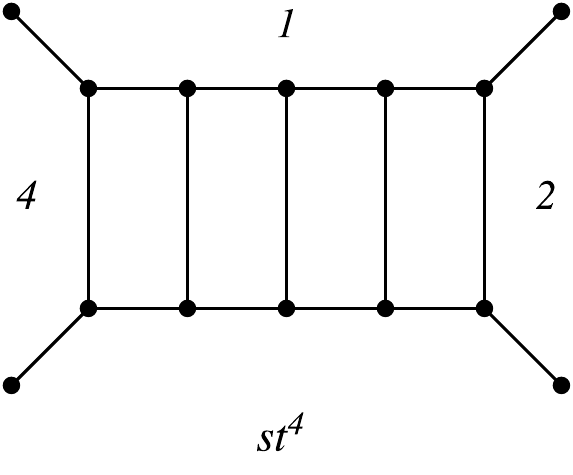}&
\includegraphics[width=3.75cm]{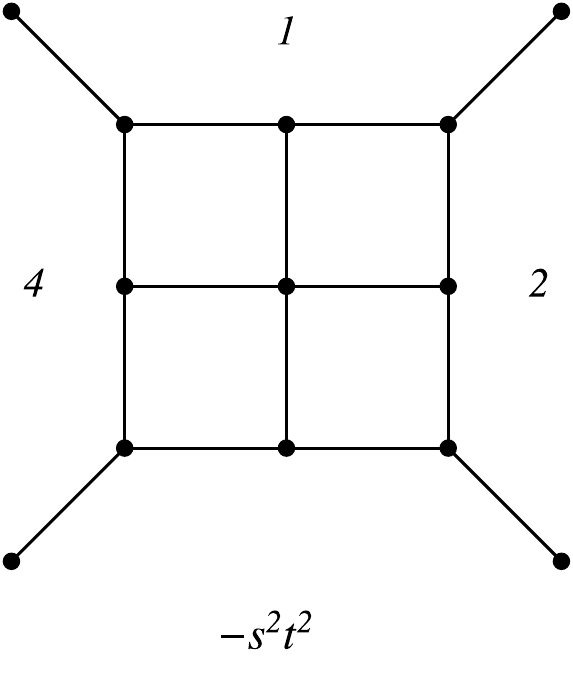}&
\includegraphics[width=3.75cm]{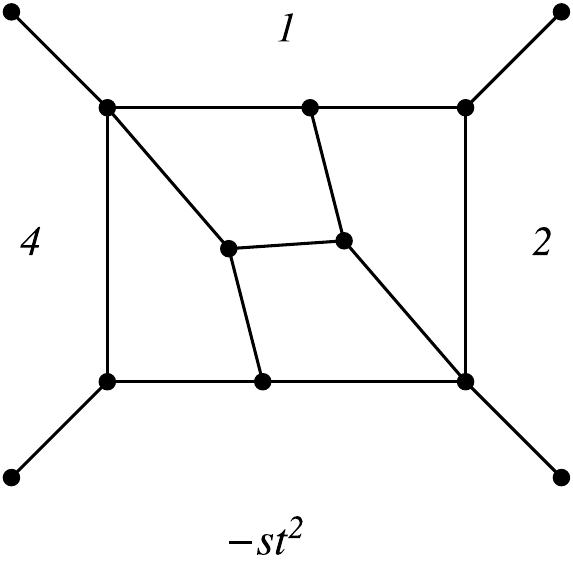}&
\includegraphics[width=3.75cm]{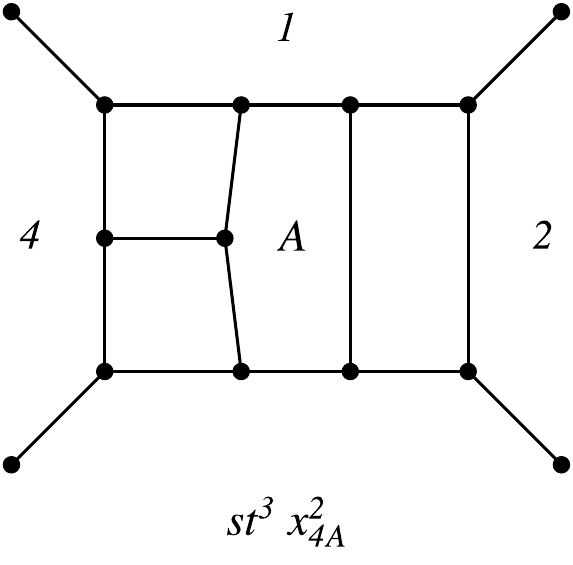}\\
\includegraphics[width=3.75cm]{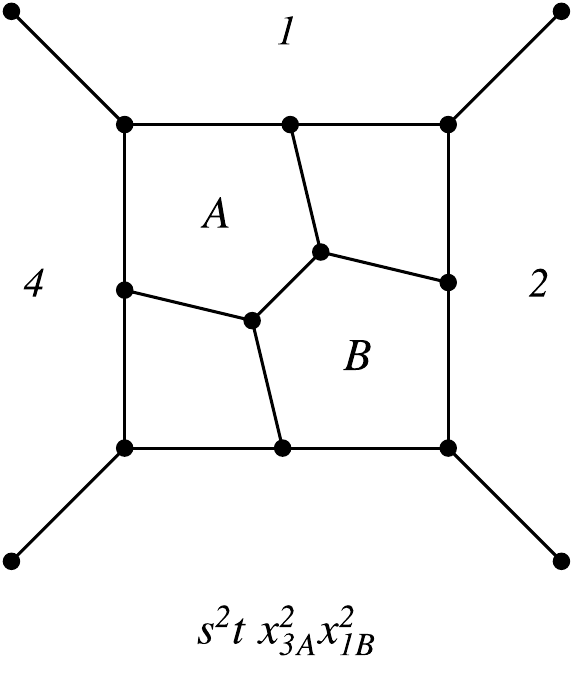}&
\includegraphics[width=3.75cm]{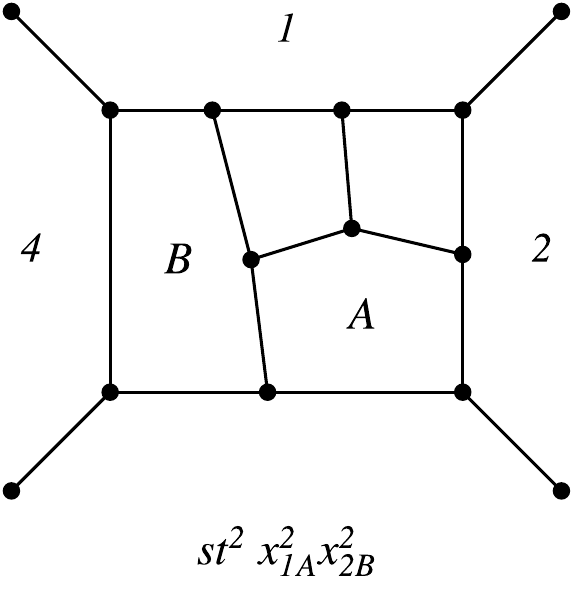}&
\includegraphics[width=3.75cm]{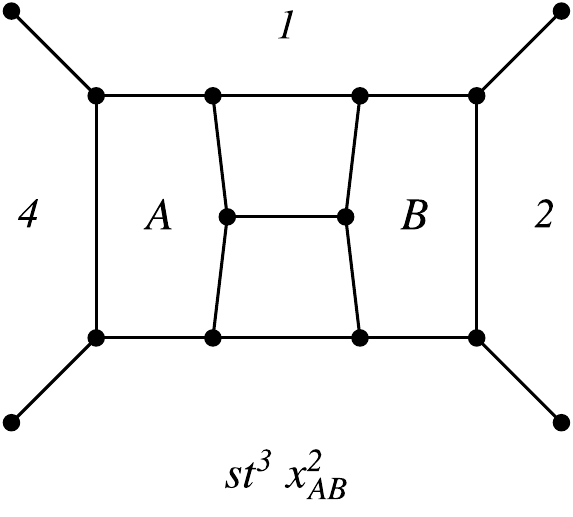}&
\includegraphics[width=3.75cm]{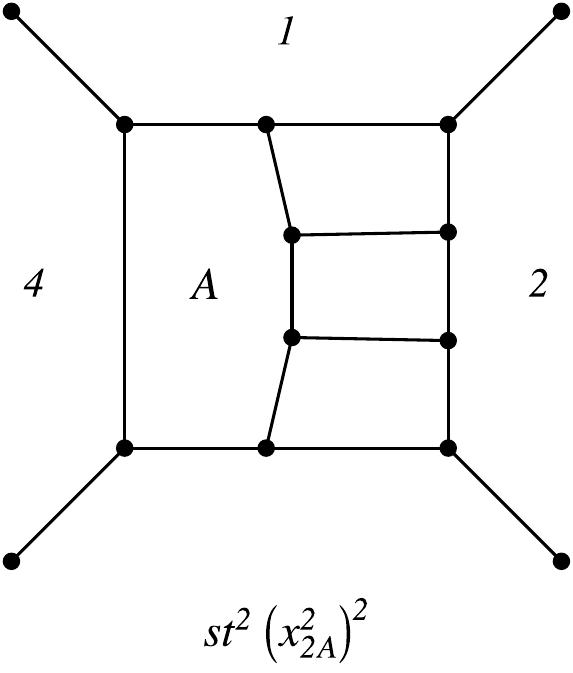}
\end{tabular}
\end{center}\vspace{-1cm}
\end{figure}

\begin{figure}[]\vspace{-1.05cm}\begin{center}\hspace{-0.05cm}\begin{tabular}{m{3cm}@{}m{3cm}@{}m{3cm}@{}m{3cm}@{}m{3cm}}
\includegraphics[width=2.9cm]{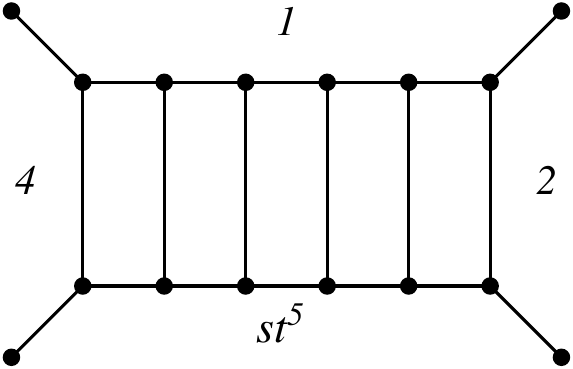}&
\includegraphics[width=2.9cm]{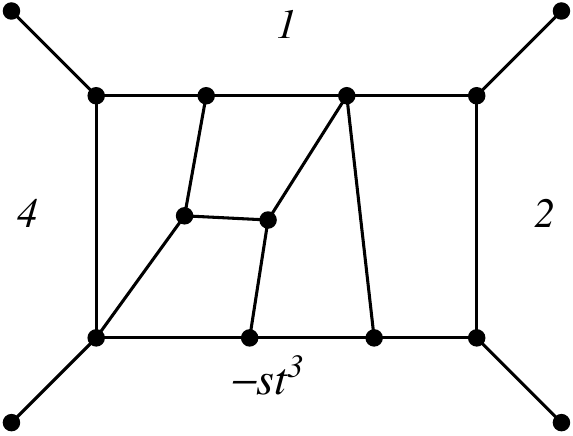}&
\includegraphics[width=2.9cm]{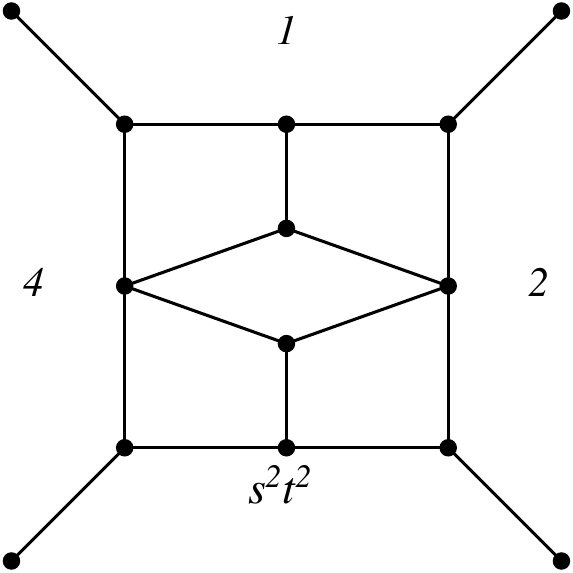}&
\includegraphics[width=2.9cm]{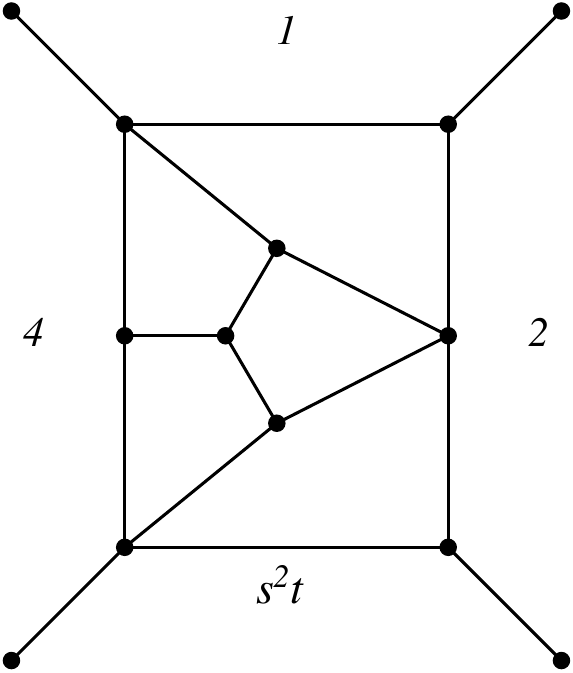}&
\includegraphics[width=2.9cm]{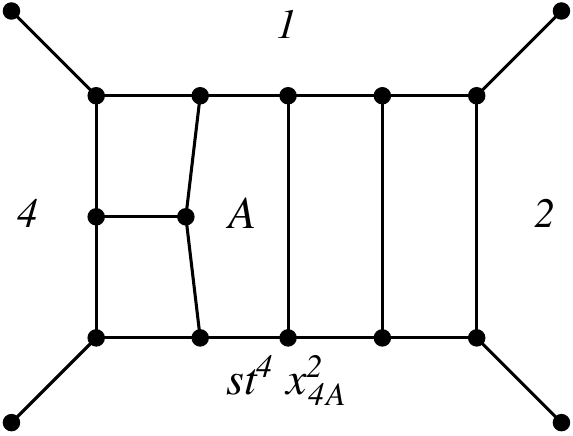}\\~\\[-0.4cm]
\includegraphics[width=2.9cm]{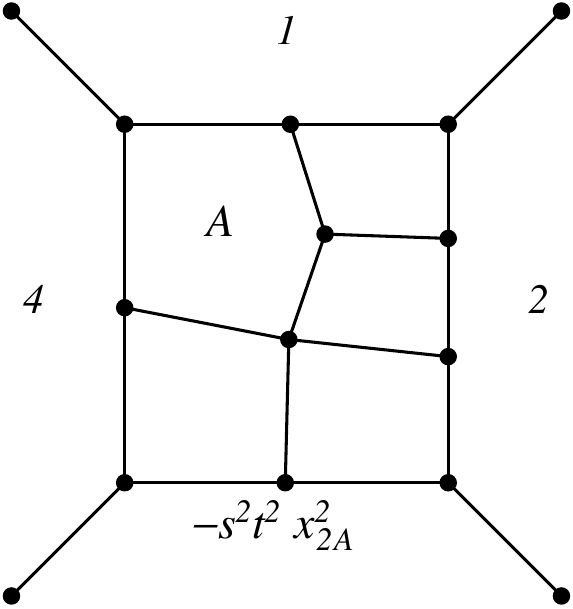}&
\includegraphics[width=2.9cm]{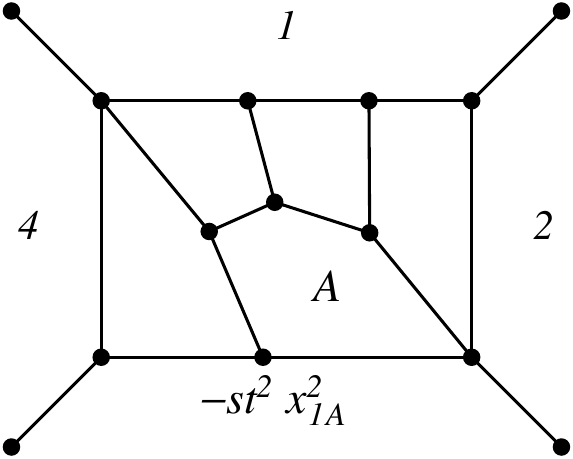}&
\includegraphics[width=2.9cm]{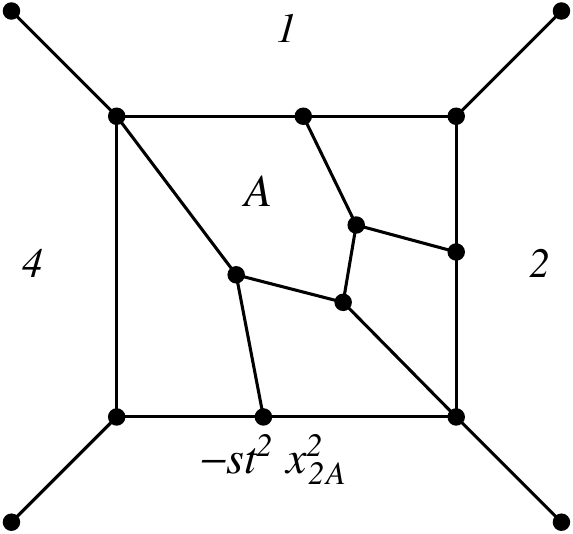}&
\includegraphics[width=2.9cm]{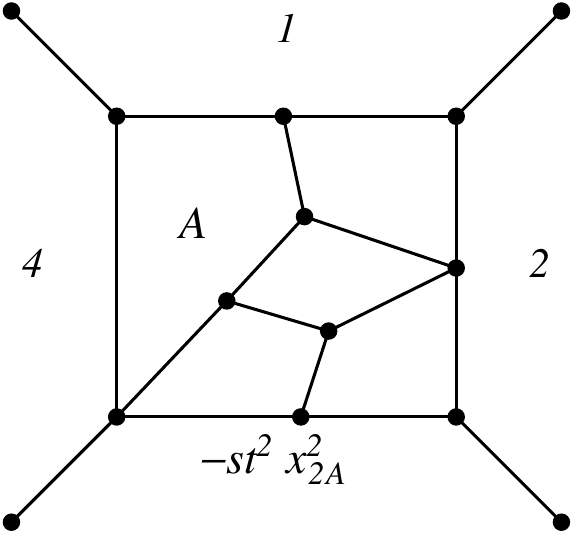}&
\includegraphics[width=2.9cm]{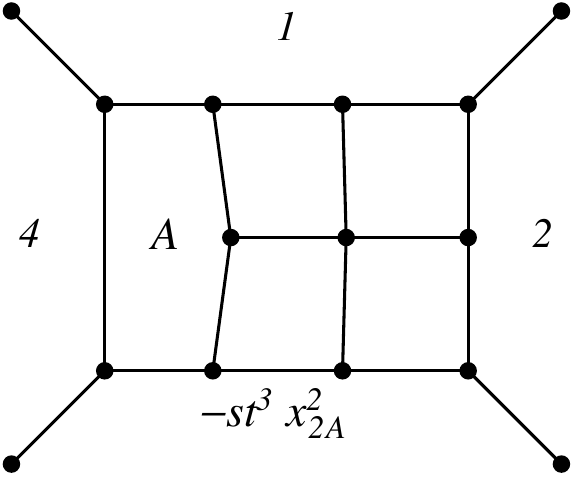}\\~\\[-0.4cm]
\includegraphics[width=2.9cm]{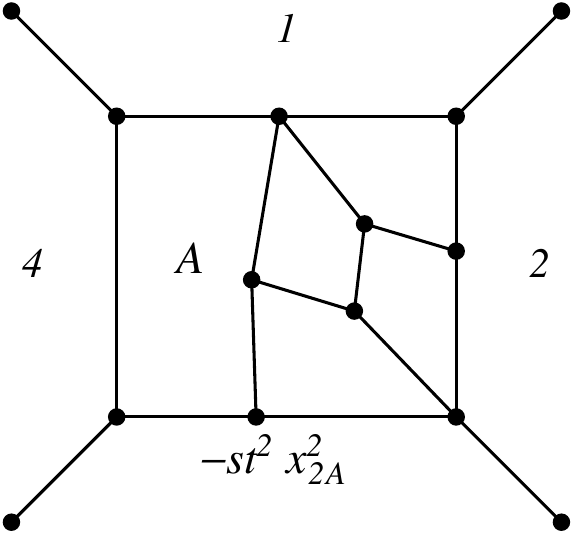}&
\includegraphics[width=2.9cm]{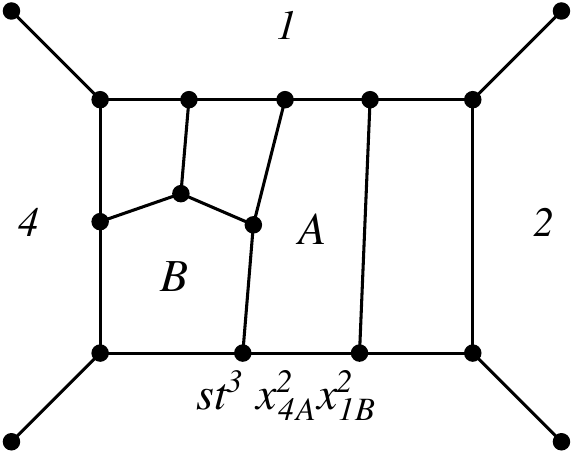}&
\includegraphics[width=2.9cm]{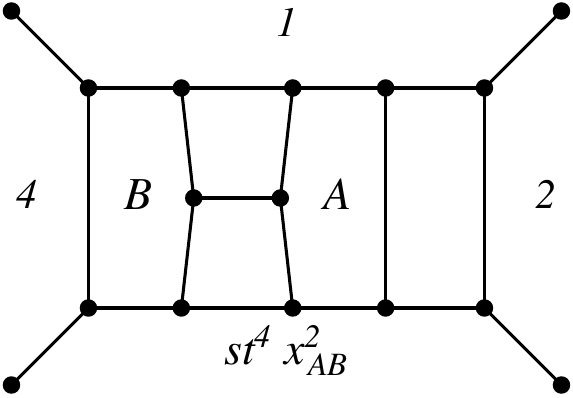}&
\includegraphics[width=2.9cm]{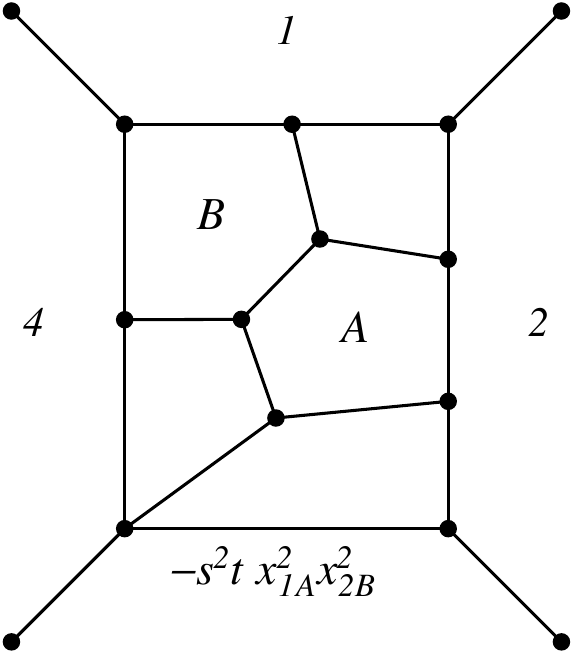}&
\includegraphics[width=2.9cm]{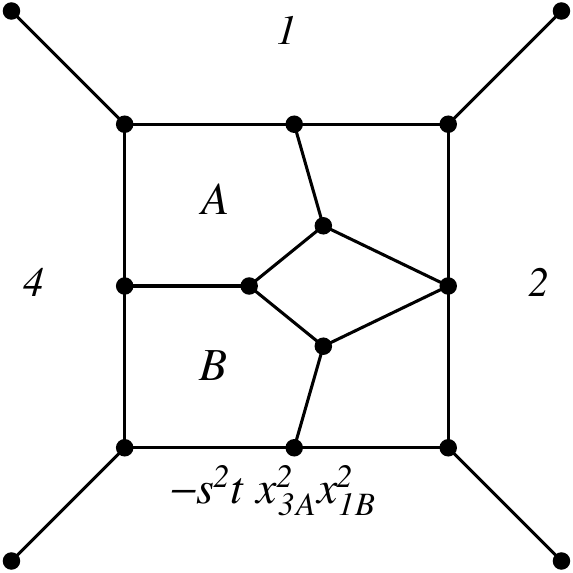}\\~\\[-0.4cm]
\includegraphics[width=2.9cm]{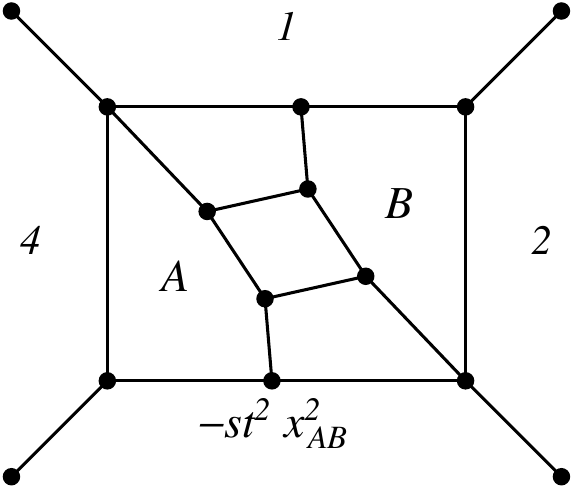}&
\includegraphics[width=2.9cm]{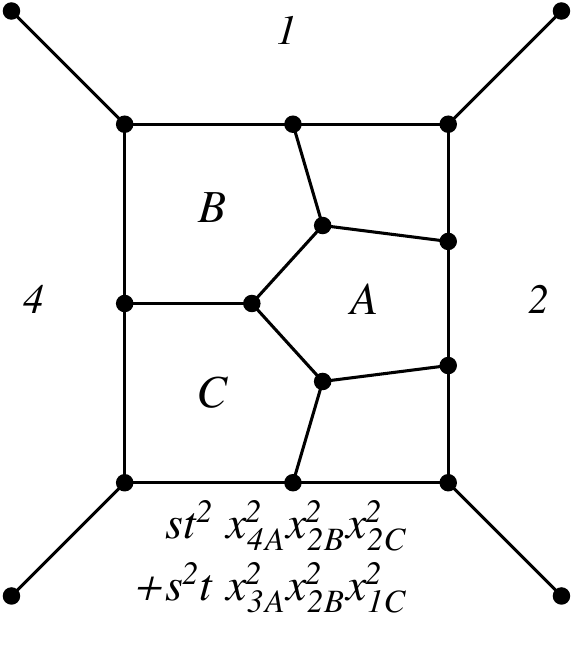}&
\includegraphics[width=2.9cm]{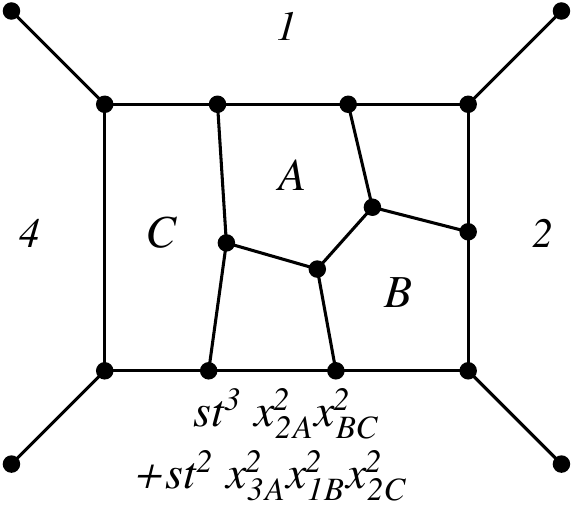}&
\includegraphics[width=2.9cm]{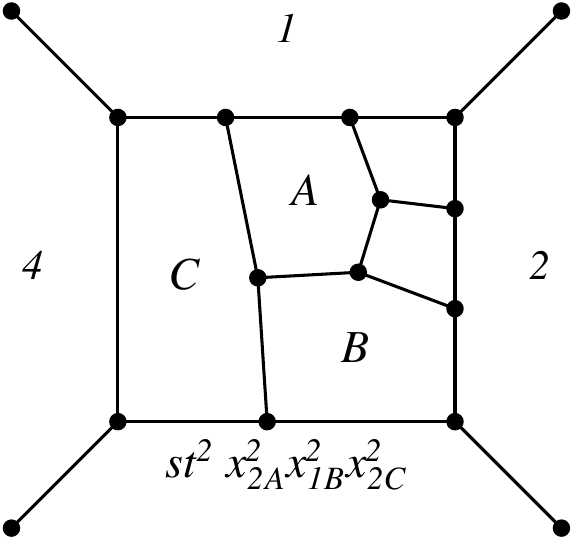}&
\includegraphics[width=2.9cm]{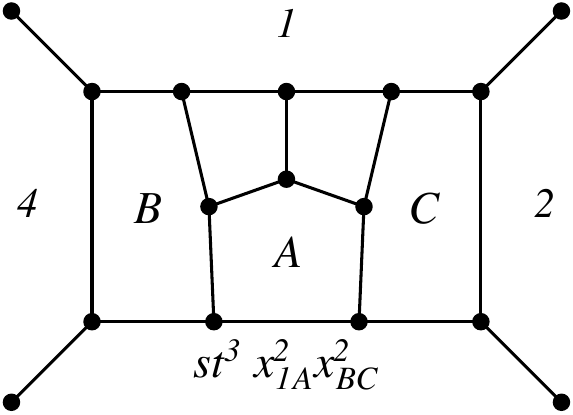}\\~\\[-0.4cm]
\includegraphics[width=2.9cm]{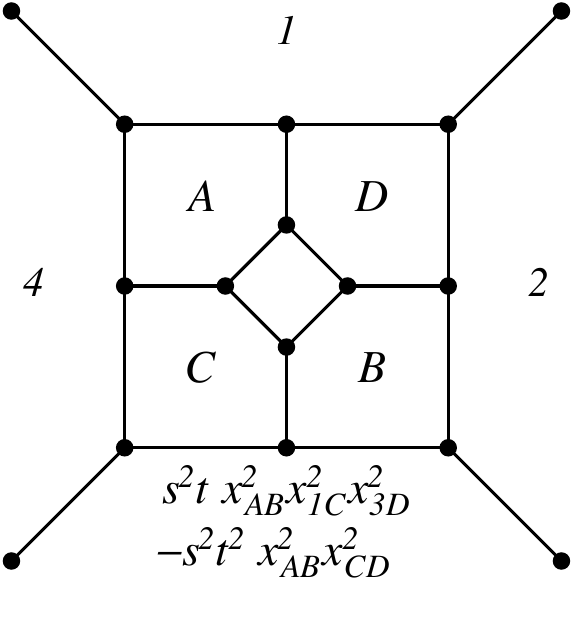}&
\includegraphics[width=2.9cm]{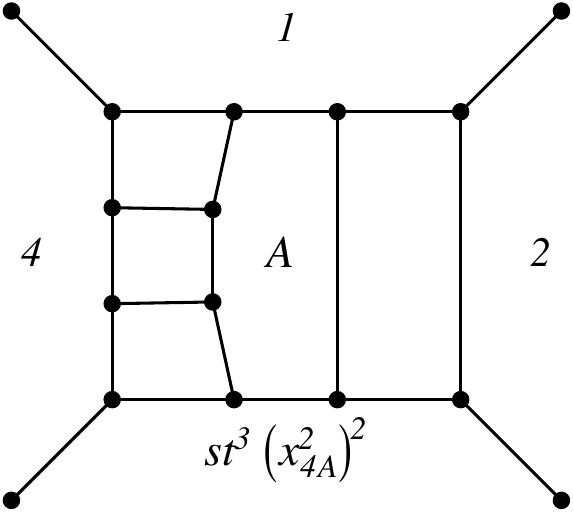}&
\includegraphics[width=2.9cm]{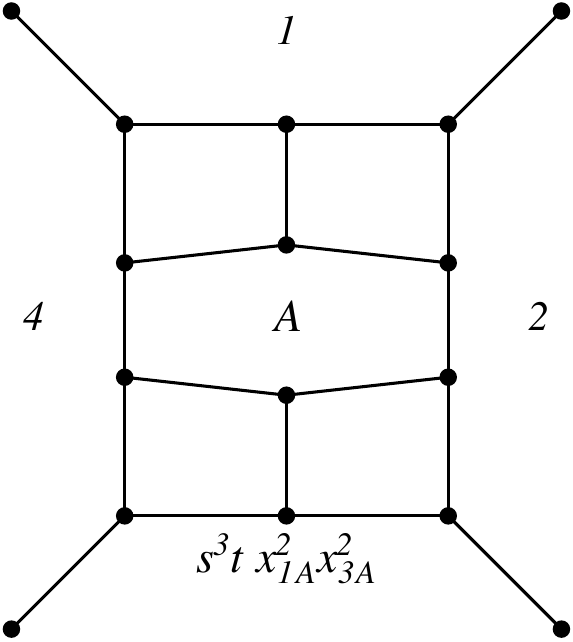}&
\includegraphics[width=2.9cm]{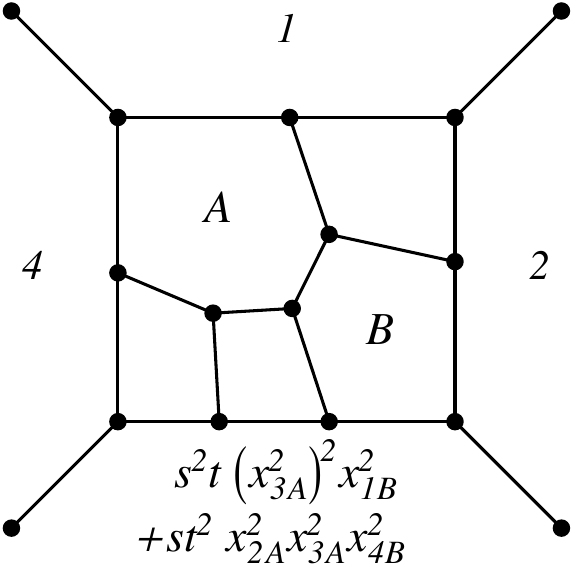}&
\includegraphics[width=2.9cm]{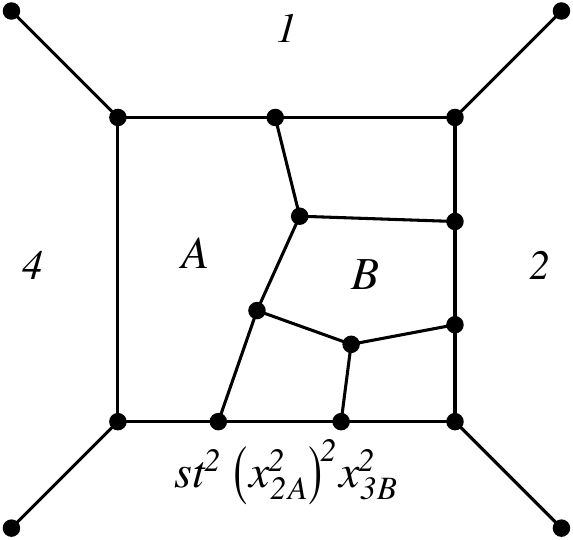}\\~\\[-0.4cm]
\includegraphics[width=2.9cm]{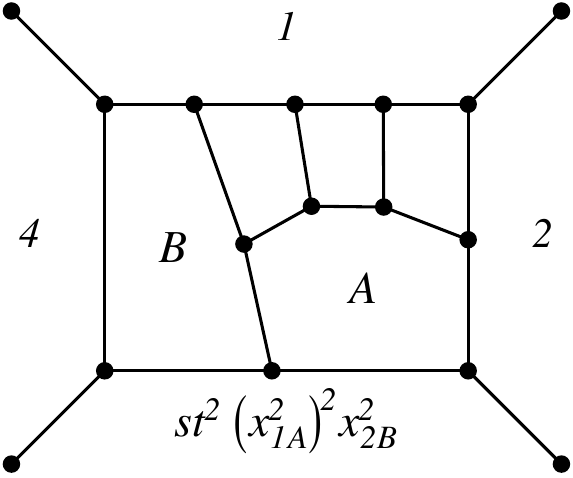}&
\includegraphics[width=2.9cm]{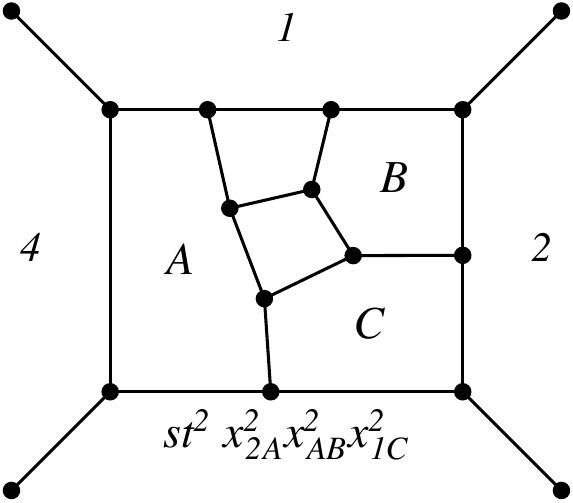}&
\includegraphics[width=2.9cm]{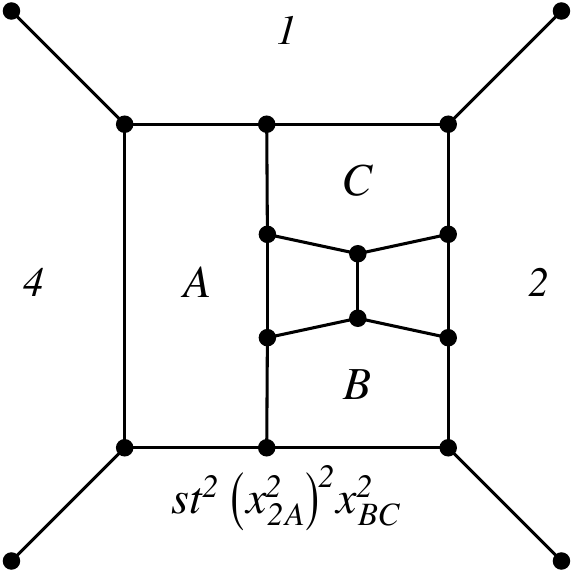}&
\includegraphics[width=2.9cm]{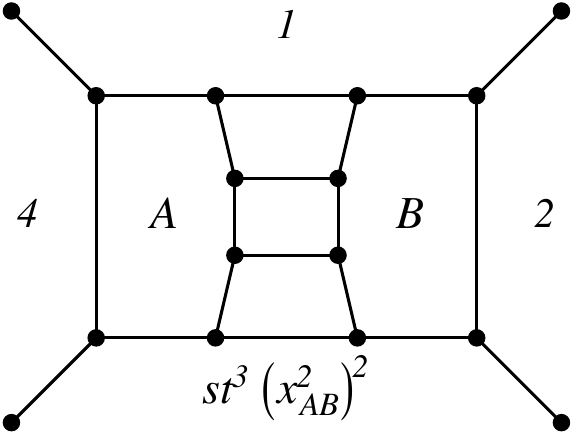}&
\includegraphics[width=2.9cm]{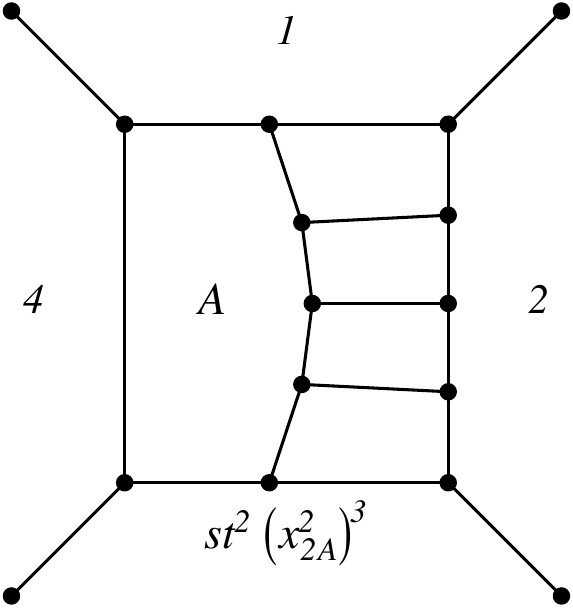}
\end{tabular}
\caption{The integrands which generate the five-loop amplitude.\label{five_loop_integrand}}\end{center}\vspace{-0cm}
\end{figure}

\begin{figure}[t]\vspace{-1cm}\scalebox{1}{\begin{minipage}[H!]{1\textwidth}
\hspace{-0.00\textwidth}
\includegraphics[width=1\textwidth]{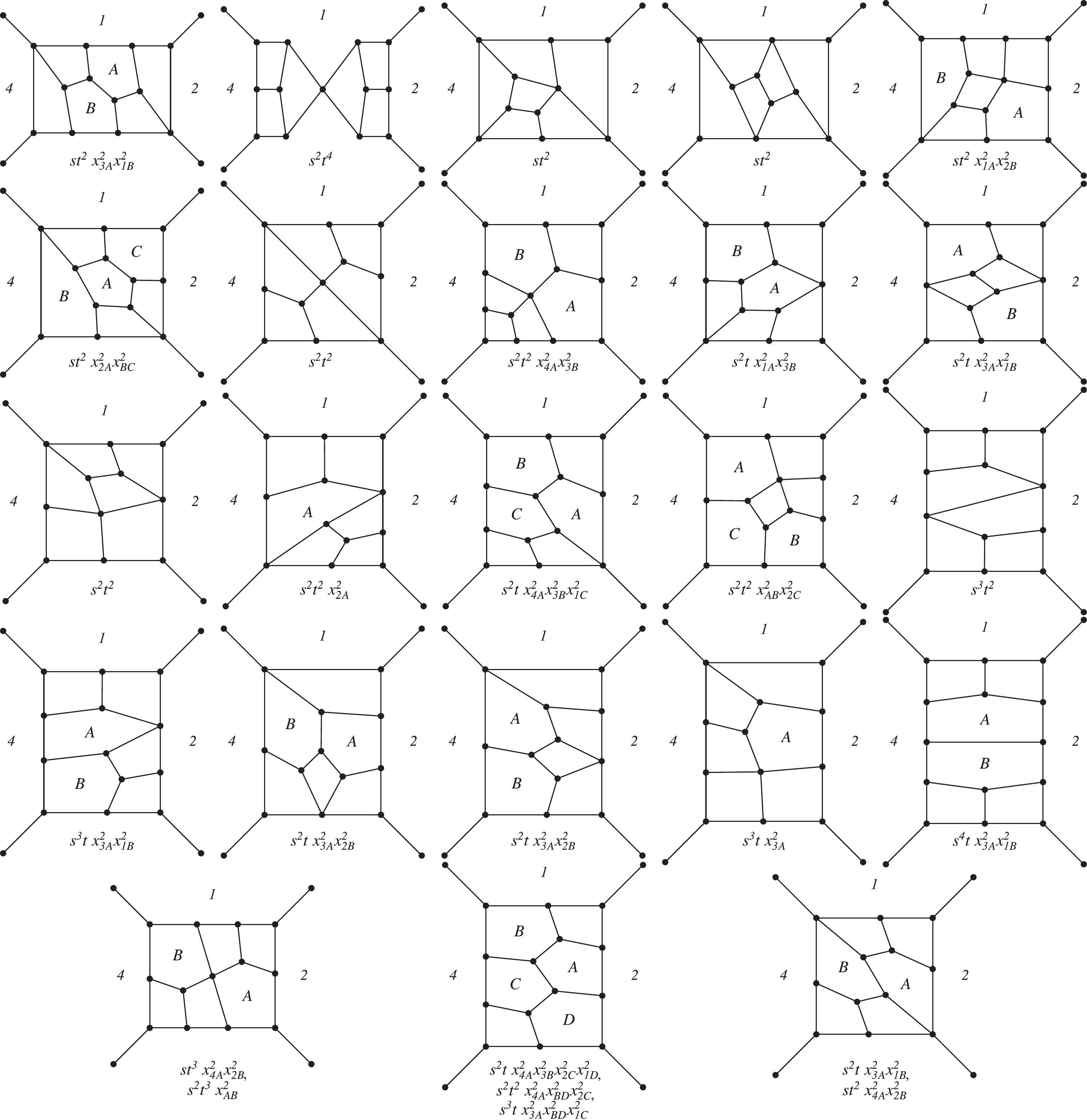}
\caption{The 27 six-loop integrands with vanishing coefficients. Notice that each of the last three graphs admit more than one numerator; the coefficient of each possible integrand is separately zero.\label{six_loop_coeff_0}}\end{minipage}\vspace{-5.5cm}}
\end{figure}

\section*{Acknowledgments}
It is a pleasure to acknowledge the many helpful insights offered by F.~Cachazo, J.~Trnka, and especially N.~Arkani-Hamed during the earliest stages of this work. We are also grateful for discussions with and encouragement from  G.~Korchemsky,
D.~Skinner and E.~Sokatchev. We are also grateful for Enrico Hermann for pointing out some important typographical errors in earlier versions of this paper. This work was supported in part by the Harvard Society of Fellows (JB); the US Department of Energy, under contracts DE-FG02-91ER40654 (JB), DE-FG02-91ER40688 (MS and AV) and DE-FG02-11ER41742 Early Career Award (AV); the National Science Foundation, under grants PHY-0756966 (JB) and PHY-0643150 PECASE (AV); and the Sloan Research Foundation (AV).

\providecommand{\href}[2]{#2}\begingroup\raggedright\endgroup

\end{document}